\begin{document}
\title{Quantum Algorithm for Vector Set Orthogonal Normalization and Matrix QR Decomposition with Polynomial Speedup}
\author{Zi-Ming Li}
\affiliation{School of Integrated Circuits, Tsinghua University, Beijing 100084, China}
\author{Yu-xi Liu}\email{yuxiliu@mail.tsinghua.edu.cn}
\affiliation{School of Integrated Circuits, Tsinghua University, Beijing 100084, China}
\affiliation{Frontier Science Center for Quantum Information, Beijing, China}

\date{\today}

\begin{abstract}
Vector set orthogonal normalization and matrix QR decomposition are fundamental problems in matrix analysis with important applications in many fields. We know that Gram-Schmidt process is a widely used method to solve these two problems. However, the existing methods, including Gram-Schmidt process have problems of high complexity, scaling $O(N^3)$ in the system dimension $N$, which leads to difficulties when calculating large-scale or ill-conditioned problems. With the development of quantum information processing, a series of quantum algorithms have been proposed, providing advantages and speedups over classical algorithms in many fields. In this paper, we propose quantum algorithms to solve these two problems based on the idea of Gram-Schmidt process and quantum phase estimation. The complexity of proposed quantum algorithms is also theoretically and numerically analyzed. We find that our algorithms provide polynomial acceleration over the best-known classical and quantum algorithms on these two problems, scaling $O(N^2\mathrm{poly}(\log N))$ in the dimension $N$ of the system.
\end{abstract}
\maketitle

\section{Introduction}
Vector set orthogonal normalization and matrix QR decomposition stand as fundamental problems in linear algebra and matrix analysis~\cite{horn2012matrix, bhatia2013matrix}. They are also crucial for a myriad of applications in various scenarios, including scientific computing~\cite{businger1965linear, leon2013gram, parlett2000qr}, machine learning and artificial intelligence~\cite{zheng2004real, wang2016unsupervised, aizenberg2012modified}, and many other engineering fields~\cite{zhang2005equal, cheng2016fast, srinivasa2012use}. The goal of vector set orthogonal normalization is to transform sets of vectors into orthogonal normalized ones. Matrix QR decomposition aims at decomposing full-rank matrices into the product of orthogonal matrix $Q$ and upper triangular matrix $R$.  In the last few decades, many classical algorithms have been proposed and improved for solving these two problems~\cite{goodall199313, giraud2005loss, giraud2005rounding, gentleman1975error} by using Gram-Schmidt orthogonalization process. However, these algorithms have problems of relatively high complexity, which scales $O(N^3)$ with the dimension $N$ of the system. Such high computational complexity often brings challenges to managing large-scale or ill-conditioned matrices efficiently and accurately.

In the 1980s, Feynman proposed quantum computers, which use quantum physical systems to achieve information-storage, transmission, and processing~\cite{feynman1985quantum}. With recent advancements in quantum computation, there has been a surge of theoretical and experimental studies in utilizing quantum algorithms to enhance the efficiency of classical computational tasks~\cite{nielsen2010quantum, preskill2018quantum, gibney2019hello, zhong2020quantum, monz2016realization, shor1994algorithms}. In particular, many quantum machine learning-based algorithms have emerged as powerful tools to solve complex linear algebra problems, offering exponential and polynomial speedups to traditional computing methods~\cite{prakash2014quantum}. These quantum algorithms can be applied to solve linear equations~\cite{harrow2009quantum, childs2017quantum}, support vector machine~\cite{rebentrost2014quantum}, principal component analysis~\cite{lloyd2014quantum}, Bayesian network~\cite{low2014quantum}, quantum Boltzmann machine~\cite{amin2018quantum}, and many other problems~\cite{prakash2014quantum}.

Recently, a quantum Gram-Schmidt algorithm was proposed for vector set orthogonal normalization~\cite{zhang2021quantum} using QRAM model~\cite{giovannetti2008quantum} with a query complexity of $O(r^{27}\kappa^{14r})$. The column vectors to be orthogonalized were lined into a matrix with the rank $r$ and conditional number $\kappa$. The proposed algorithm achieves quantum speedup with a low-rank matrix with efficient state readout but reaches high complexity when $r$ and $\kappa$ are larger. Also, a quantum algorithm was proposed for QR decomposition of square matrices with $O(N^{2.5}\mathrm{poly} \log_2(N)/\epsilon^2)$ computational complexity~\cite{ma2020quantum}, where $N$ is the size of the matrix and $\epsilon$ is the desired precision. The scaling on the size of the matrix $N$ successfully achieves polynomial speedup over classical algorithms but the speedup is limited.

Stimulated by previous studies~\cite{zhang2021quantum,giovannetti2008quantum,ma2020quantum}, we here revisit and propose new quantum algorithms for vector set orthogonal normalization and matrix QR decomposition. We explain the details of the proposed algorithms and prove their correctness by theoretical derivations and numerical simulations. We use the QRAM model for efficient quantum initial state preparation~\cite{giovannetti2008quantum, giovannetti2008architectures}, which is a reasonable quantum oracle model widely used. We also analyze the complexity of our algorithm, including the number of quantum gates and the number of oracles used in the algorithms. The query complexity of our vector set orthogonal normalization algorithm scales $O(N^2)$ in the system dimension $N$. Our algorithm provides polynomial speedup over the previous result~\cite{zhang2021quantum}, which scales at least $O(N^{27})$ for the full rank matrix. The query complexity of our matrix QR decomposition algorithm scales $O(N^2\log_2 N)$ in the system dimension $N$, which also provides polynomial speedup over previous result~\cite{ma2020quantum} whose scaling in system dimension $N$ is $O(N^{2.5}\mathrm{poly} \log_2(N))$. Thus, the complexity of the proposed algorithms is optimal to date.

The paper is organized as follows. In Sec.~\ref{section2}, we provide a formal definition of the problems to be solved and briefly review the classical algorithms for vector set orthogonal normalization and QR decomposition problems. We also explain how Gram-Schmidt process can be used to solve these problems. For the completeness of the paper, in Sec.~\ref{section3}, we summarize the main result of the quantum phase estimation algorithm, explain the oracle used in our paper, and give the details on the controlled Hamiltonian simulation step. In Sec.~\ref{section4}, we introduce our quantum algorithm for vector set orthogonal normalization problem and evaluate the performance of the proposed algorithm. In Sec.~\ref{section5}, we introduce our quantum algorithm for QR decomposition and also evaluate the performance of the proposed algorithm. In Sec.~\ref{section6}, we apply our algorithms to several important problems, including linear regression, solving linear equations, and finding eigenvalues. The potential applications of our algorithms are also discussed. In Sec.~\ref{section7}, we summarize our results.

\section{Problem definition and classical algorithms}
\label{section2}
Formally, the problems of vector set orthogonal normalization and matrix QR decomposition of matrix are defined respectively as follows~\cite{horn2012matrix,bhatia2013matrix}.
\theoremstyle{definition}
\newtheorem{definition}{Problem}
\begin{definition}
	\label{define1}
	\textbf{Vector Set Orthogonal Normalization}\\
	\indent Let $S$ be a set containing $M$ elements, each of the element is an $N$-dimensional vector, i.e. $S=\{a_1,a_2,\cdots,a_{M}\}$, \; $a_m\in \mathbb{C}^N, \forall m\in\{1,2,\cdots,M\}$. Find a set of vectors $S'=\{u_1,u_2,\cdots,u_{T}\}$ satisfying:\\
\indent 1.\;$u_{t_1}^{\dagger}u_{t_2}=\delta_{t_1t_2}$,\;$\forall t_1,t_2=1,2,\cdots,T$\\
\indent 2.\;$span\{a_1,a_2,\cdots,a_{M}\}=span\{u_1,u_2,\cdots,u_{T}\}$
\end{definition}
\begin{definition}
	\label{define2}
	\textbf{Matrix QR Decomposition}\\
	\indent Let $A\in \mathbb{C}^{N\times M}$ be an arbitrary matrix with full rank satisfying $N\geq M$. Find an orthogonal matrix $Q\in \mathbb{C}^{N\times M}$ and an upper triangular matrix $R\in \mathbb{C}^{M\times M}$ satisfying:\\
	\indent 1.\;$Q^{\dagger}Q=I_{M\times M}$\\
	\indent 2.\;$A=QR$
\end{definition}

There are many classical numerical methods for Problems~\ref{define1} and~\ref{define2}. Problem~\ref{define2} can be regarded as a generalization of Problem~\ref{define1}. The transformation matrix $R$ is obtained at the same time when the column vectors of given matrix $A$ is orthogonal normalized. One of the most common ways to solve  Problems~\ref{define1} and \ref{define2} is the Gram-Schmidt orthogonalization process. Gram-Schmidt process based algorithms for these two problems are summarized in following Algorithm~\ref{alg1} and Algorithm~\ref{alg2}, respectively.
\begin{algorithm}[H]
	\caption{Gram-Schmidt Process for Vector Set Orthogonal Normalization~\cite{horn2012matrix,bhatia2013matrix}}
    \SetAlgoLined
	\label{alg1}
	\KwIn{$S=\{a_1,a_2,\cdots,a_{M}\}$, $a_m\in \mathbb{C}^N$,$\forall m=1,2,\cdots,M$}
	\KwOut{$S'=\{u_1,u_2,\cdots,u_{T}\}$, satisfying $u_{t_1}^{\dagger}u_{t_2}=\delta_{t_1t_2}$, $\forall t_1, t_2=1,2,\cdots,T$, $span\{a_1,a_2,\cdots,a_{M}\}=span\{u_1,u_2,\cdots,u_{T}\}$}
	\BlankLine
	$v\gets a_1/\parallel a_1\parallel$\;
	$S'\gets \{v\}$\;
		\For{$m\leftarrow2$ \textbf{to} $M$}{
		 $v\gets a_m$;\\
		\For{$u_t$ \textbf{in} $S'$}{
		$v\gets v-u_t^{\dagger} a_mu_t$;
	}
		\If{$\parallel v\parallel>0$}{
		$S'\gets S'\cup\{v/\parallel v\parallel\}$;
	}
	}
\end{algorithm}
\begin{algorithm}[H]
	\caption{Gram-Schmidt Process for Matrix QR Decomposition~\cite{horn2012matrix,bhatia2013matrix}}
	\label{alg2}
\SetAlgoLined
		\KwIn {$A=\left(a_1,a_2,\cdots,a_{M}\right)$, $a_i\in \mathbb{C}^N$, $N\geq M$}
		\KwOut {$Q=(q_1,q_2,\cdots,q_M)$,$R=[R_{m_1m_2}]$ satisfying $QR=A$, $Q^{\dagger}Q=I_{N\times N}$, $R_{m_1m_2}=0$, $\forall m_1> m_2$}
		\BlankLine
	 $q_1\gets a_1/\parallel a_1\parallel$\;
	$R_{11} \gets \parallel a_1\parallel $\;
		\For {$m_1=2$ \textbf{to} $M$}{
		$q_{m_1}\gets a_{m_1}$\;
		\For{$m_2=1$ \textbf{to} $m_1-1$}{
		$R_{m_2m_1} \gets q_{m_2}^{\dagger} a_{m_1}$\;
		$q_{m_1}\gets q_{m_1}-R_{m_2m_1}q_{m_2}$\;
	}
		$R_{m_1m_1} \gets \parallel q_{m_1}\parallel $\;
		$q_{m_1} \gets q_{m_1}/R_{m_1m_1}$\;
	}
\end{algorithm}
Hereafter $\parallel \cdot\parallel $ denotes 2-norm for matrices and vectors unless specified otherwise. In addition to the Gram-Schmidt orthogonalization process, many other classical numerical methods have also been proposed to solve these two problems, among which Householder transformation and Givens transformation are most commonly used. The complexity and numerical stability of these methods has been analyzed in detail over the last few decades~\cite{goodall199313, giraud2005loss, giraud2005rounding, gentleman1975error}.

It is noted that classical algorithms have problems of relatively high complexity, always scaling $O(N^3)$ in the system dimension $N$. For instance, for the QR decomposition of $N\times N$ matrices, the classical Gram-Schmidt process needs $2N^3$ floating point operations, while Householder transformation and Givens transformation need $8N^3/3$ floating point operations. In engineering,  matrices to be processed often have a scale of at least $10^3$. If we do matrix QR decomposition of $10^3\times10^3$ matrix using the classical Gram-Schmidt orthogonalization method, we need to do $2\times 10^9$ floating-point operations. The number of floating-point operations performed by a personal computer per second is about $10^6$, so the running time of the algorithm is about $2000$ seconds, more than half an hour. Thus the $O(N^3)$ scaling in time complexity is usually considered relatively high. To reduce this complexity, we develop quantum algorithms for Problems~\ref{define1} and~\ref{define2} and describe our algorithms in the following sections.

\section{Quantum Phase Estimation}
\label{section3}
Our algorithm is based on quantum phase estimation (QPE), which mainly includes initial state preparation, controlled-$U^{2^j}$ operations, and inverse quantum Fourier transform. In our paper, the initial state preparation is realized via the model of the quantum random-access memory (QRAM)~\cite{lloyd2014quantum, harrow2009quantum, rebentrost2014quantum}. The controlled-$U^{2^j}$ operations are implemented via the Hamiltonian simulation with qubitization~\cite{low2019hamiltonian}. For the completeness of the paper, we here briefly summarize the main results of QPE, the QRAM model, and the qubitization, which are used in our algorithms.

QPE is a widely-used quantum algorithm based on quantum Fourier transformation and is performed as a subroutine in many other quantum algorithms~\cite{nielsen2010quantum,harrow2009quantum,shor1994algorithms}. Several quantum algorithms with QPE as a subroutine have been proved to have exponential acceleration over classical algorithms for many important problems such as discrete logarithm problem, hidden subgroup problem, large integer prime factorization problem, etc~\cite{shor1994algorithms,shor1999polynomial,hales2000improved}.

We assume that an $N$-dimensional unitary operator $U$ has an eigenvalue ${e^{2\pi i\phi_{n}}}$ with unknown $\phi_{n}$ corresponding to eigenstate $|u_{n}\rangle$ for $n=1,\cdots, N$. The goal of the QPE algorithm is to estimate each $\phi_{n}$ with the assistance of black box oracle for preparing an initial state $|u\rangle_s$ and performing the controlled-$U^{2^j}$ operator~\cite{nielsen2010quantum}. Thus, the QPE procedure uses two registers. The first register containing $j$ qubits is initialized in the ground state $|0\rangle^{\otimes j}_{f}$, and the second one is initialized in a state $|u\rangle_s$ in the $N$-dimensional space.  The corresponding quantum circuit for QPE is given in Fig.~\ref{fig1}.

\begin{figure}[h]
   \centering
   	\includegraphics[width=\linewidth]{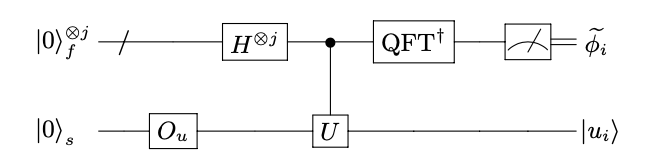}
	\caption{Quantum circuit for QPE. $O_u$ represents the preparation of an initial state $|u\rangle$. In our paper, it is implemented by the QRAM model. The controlled-$U$ operator denotes a series of controlled-$U^{2^j}$ operators $\{C-U^{2^{j-1}},C-U^{2^{j-2}},\cdots,C-U^{2^{1}},C-U^{2^{0}}\}$. ${\rm QFT}^{\dagger}$ denotes inverse quantum Fourier transform. The control qubits are in the first register, labelled by the subscript $f$. The target qubits are in the second register, labelled by the subscript $s$. To avoid confusion, $H$ represents Hadamard gate here in the circuit.}
	\label{fig1}
\end{figure}

The unitary operator $U$ is usually realized via the time evolution operator of a given Hamiltonian $H$, which has spectral decomposition $H=\sum_{n=1}^{N}\lambda_n |u_n\rangle\langle u_n|$. That is,  the unitary  operator $U$ can be expressed as
\begin{equation}
	U=\sum_{n=1}^{N}\exp\left(-i\lambda_n t\right) |u_n\rangle\langle u_n|,
\end{equation}
 in which the phase factor $-\lambda_nt$ corresponds to $2\pi \phi_n$, i.e., $-\lambda_nt=2\pi \phi_n$. Hereafter, we take $\hbar=1$. The second register is initialized in an arbitrary quantum state $|u\rangle_s$ with
		\begin{equation}
			|u\rangle_{s} = \sum_{n=1}^{N}\langle u_n|u\rangle_{s} |u_n\rangle,
		\end{equation}
 which is a linear combination of the eigenstates of $U$. The final state of the quantum circuit before measurements is
		\begin{equation}
			\sum_{n=1}^{N}\langle u_n|u\rangle_{s}\left|\frac{-\lambda_n t}{2\pi}2^j\right\rangle|u_n\rangle,
		\end{equation}
	i.e.,
	\begin{equation}
			\sum_{n=1}^{N}\langle u_n|u\rangle_{s}\left|\phi_n 2^j\right\rangle|u_n\rangle.
\end{equation}
Hereafter $|\beta\rangle|\gamma\rangle$ denotes that the first (second) register is in the state $|\beta\rangle$ ($|\gamma\rangle$).
It is noted that $\lambda_n$ can be $0$. Derivation of QPE  is given in Appendix~\ref{ap1} when the qubit number of the first register is $1$, which is always the case in our proposed algorithms.

A key component in QPE algorithm is to implement an oracle, which is used for preparing the initial state $|u\rangle_s$ and performing the controlled-$U^{2^j}$ operator. Here,  we assume that the state $|u\rangle$ preparation is realized via the QRAM model, which is  used in many quantum algorithms~\cite{lloyd2014quantum, harrow2009quantum, rebentrost2014quantum}. The QRAM model for realizing $O_{u}$ in Fig.~\ref{fig1} can prepare initial state  $|u\rangle_s$  in $O(\log_2 N)$ time, i.e.,
		\begin{equation}
			\label{eq31}
			|0\rangle_s \stackrel{\rm QRAM}{\longrightarrow} O_u|0\rangle_s=|u\rangle_s,
		\end{equation}
with the ground state $|0\rangle_{s}$ of the second register. The implementation of the controlled-$U^{2^j}$ operators requires the Hamiltonian simulation, which is the original intention of Feynman's quantum computer~\cite{feynman1985quantum}.
In our algorithm,  the Hamiltonian simulation is realized via the qubitization~\cite{low2019hamiltonian}, which is considered as the best Hamiltonian simulation method~\cite{miessen2023quantum} up to now. In qubitization, additional auxiliary qubits are introduced to the second register. The Hamiltonian $H$ is accessed through two operators $V$ and $G$, in which the operator $G$ only acts on the states of auxiliary qubits in the second register, and the operator $V$ acts on the states of both auxiliary qubits and original qubits of the second register.  $G$ and $V$ are chosen to satisfy the following condition,
\begin{equation}\label{Eq2-2}
	\left(\langle 0|_aG^{\dagger}\otimes I_s\right)V\left(G|0\rangle_a\otimes I_s\right)\propto H,
\end{equation}
with the ground state $|0\rangle_{a}$ of the  auxiliary qubits. Here, the subscript $a$ is for ancillary qubits introduced in the second register and $s$ is for original qubits of the second register.

In the most common case and here, the Hamiltonian $H$ is assumed to have a form of linear combination of unitaries $H=\sum_{l=1}^{d}\alpha_lV_l$. Then operators  $V$ and $G$ can be chosen as
\begin{align}
	\label{eq5}
	& V=\sum_{l=1}^{d}|l\rangle_a\langle l|_a \otimes V_l,\\\nonumber
	& G=\sum_{l=1}^{d} \sqrt{\frac{\alpha_l}{\sum_{l=1}^{d}|\alpha_l|}} |l\rangle_a\langle 0|_a+\dotsb.
\end{align}

The greatest advantage of the qubitization is that the qubitization has low computational complexity. From Corollary $16$ in~\cite{low2019hamiltonian}, the query complexity for simulating the Hamiltonian $H$ with form of linear combination of unitaries can be given in the following lemma.
\newtheorem{lemma}{Lemma}
\begin{lemma}[the Linear-Combination-of-Unitaries (LCU) algorithm for Qubitization~\cite{low2019hamiltonian}]
	\label{lm1}
Given $H$ is  accessed via operators $V$ and $G$ as in Eq.~(\ref{eq5}), which specifies a Hamiltonian $H=\sum_{l=1}^{d}\alpha_lV_l$, the time evolution by $H$ can be simulated for time $t$ and error $\epsilon_0$ with $O(\alpha t+\log_2(1/\epsilon_0))$ queries to $V$ and $G$, where $\alpha=\sum_{l=1}^{d}|\alpha_l|$. The desired qubit number of the second register, including both original and auxiliary qubits, is $\lceil \log_2 N\rceil+\lceil \log_2 d\rceil+2$ for simulating the Hamiltonian $H$, where $N$ is the dimension of the Hamiltonian $H$. The number of additional two-qubit quantum gates needed is $O\left(\log_2 d\left(\alpha t+\log_2(1/\epsilon_0)\right)\right)$.
\end{lemma}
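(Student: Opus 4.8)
The plan is to reconstruct the qubitization-based simulation of Low and Chuang~\cite{low2019hamiltonian} in three stages: turn the access pair $(V,G)$ into a single quantum walk operator and pin down its spectrum; use quantum signal processing to reshape that spectrum into the time-evolution phases; and tally the queries, qubits, and gates. For the first stage, note that Eq.~(\ref{eq5}) together with the defining condition Eq.~(\ref{Eq2-2}) shows that $\mathcal{U}_H := (\langle 0|_a G^{\dagger}\otimes I_s)\,V\,(G|0\rangle_a\otimes I_s)$ equals $H/\alpha$ with $\alpha=\sum_{l=1}^{d}|\alpha_l|$; that is, $(V,G)$ supplies a block-encoding of $H$ with subnormalization $\alpha$ on $\lceil\log_2 d\rceil$ ancilla qubits. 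From this I would build the walk operator $W$ by composing $V$ (dressed by $G$ and $G^{\dagger}$) with the reflection $R=(2|0\rangle_a\langle 0|_a-I_a)\otimes I_s$ about the flagged subspace, so that one use of $W$ costs one query each to $V$, $G$, $G^{\dagger}$ plus one reflection on the ancilla register.

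The heart of the proof is the qubitization spectral lemma: $W$ decomposes into two-dimensional invariant subspaces, one for each eigenpair $(\lambda_n,|u_n\rangle)$ of $H$, namely the plane spanned by $G|0\rangle_a|u_n\rangle$ and a suitable orthogonal companion vector, and on that plane $W$ acts as a planar rotation with eigenphases $\pm\arccos(\lambda_n/\alpha)$. I would prove this by checking that $V$ sends $G|0\rangle_a|u_n\rangle$ into this plane with amplitudes $\lambda_n/\alpha$ and $\sqrt{1-(\lambda_n/\alpha)^2}$ (using the involution-type property of $V$ and the block-encoding identity above) and that $R$ restricts to a reflection of the same plane. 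This is the step I expect to be the main obstacle: obtaining the invariant decomposition cleanly, treating the degenerate cases $\lambda_n=0$ and $|\lambda_n|=\alpha$, and verifying that the unspecified part of $G$ in Eq.~(\ref{eq5}) (the ``$\dotsb$'' completing it to a unitary) does not spoil the rotation structure.

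Finally, given $W$ with eigenphases $\arccos(\lambda_n/\alpha)$, I would append one more ancilla qubit and interleave $K$ uses of $W$ with single-qubit rotations on it; by the quantum signal processing theorem the rotation angles can be chosen so that the circuit applies a block-encoding of a degree-$K$ polynomial $P(\lambda_n/\alpha)$ approximating $e^{-i\lambda_n t}$. Truncating the Jacobi--Anger expansion of $e^{-i\alpha t\cos\theta}$ and bounding the Bessel-function tail gives the required degree $K=O(\alpha t+\log_2(1/\epsilon_0))$ (the optimal $\log/\log\log$ refinement is not needed here), so the simulation makes $O(\alpha t+\log_2(1/\epsilon_0))$ queries to $V$ and $G$. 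The second register carries $\lceil\log_2 N\rceil$ qubits for the system states $|u_n\rangle$, $\lceil\log_2 d\rceil$ for the LCU index register acted on by $G$, one ancilla to qubitize the (not self-inverse) block-encoding, and one ancilla for the signal-processing rotations, totalling $\lceil\log_2 N\rceil+\lceil\log_2 d\rceil+2$. Apart from the query-counted $V$ and $G$ calls, each of the $K$ iterates needs one reflection about $|0\rangle_a$ on $\lceil\log_2 d\rceil$ qubits ($O(\log_2 d)$ two-qubit gates) plus $O(1)$ gates for the controlled rotation, so the extra gate count is $O(\log_2 d\,(\alpha t+\log_2(1/\epsilon_0)))$; collecting the three bounds yields the lemma.
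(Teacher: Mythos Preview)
Your proposal is a correct high-level reconstruction of the qubitization argument of Low and Chuang, but note that the paper does \emph{not} supply its own proof of this lemma: it is quoted verbatim as ``Corollary~16 in~\cite{low2019hamiltonian}'' and used as a black box. So there is no in-paper proof to compare against; you are effectively re-deriving the cited external result.

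That said, your sketch matches the standard route in~\cite{low2019hamiltonian}: block-encode $H/\alpha$ from $(V,G)$, form the iterate $W$ by composing the dressed $V$ with the ancilla reflection, establish the two-dimensional invariant-subspace (``qubitization'') lemma giving eigenphases $\pm\arccos(\lambda_n/\alpha)$, and then apply quantum signal processing with a Jacobi--Anger truncation to obtain degree $K=O(\alpha t+\log_2(1/\epsilon_0))$. The resource accounting you give (queries, the $\lceil\log_2 N\rceil+\lceil\log_2 d\rceil+2$ qubits, and the $O(\log_2 d)$ reflection cost per iterate) is consistent with the lemma as stated. The only caveat is the one you flag yourself: the clean invariant-plane decomposition requires either that $V$ be self-inverse or that one adjoin the extra ``qubitization'' ancilla to enforce it, and the degenerate eigenvalues need a one-line check; both are handled in~\cite{low2019hamiltonian} exactly as you outline.
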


Thus, all parts of the QPE circuit used for our algorithms are explained. Below,  we give our main algorithms based on the QPE circuit.

\section{Quantum Gram-Schmidt orthogonalization algorithm}
\label{section4}
In this section, based on QPE and the classical Gram-Schmidt orthogonalization algorithm, we propose quantum Gram-Schmidt orthogonalization algorithm to solve vector set orthogonal normalization problem as described in Problem~\ref{define1}.

\subsection{Algorithm Description}
Our algorithm comes from QPE algorithm. We assume that the unitary $U$ as in Fig.~$\ref{fig1}$ is the evolution operator of given Hamiltonian $H$, which has spectral decomposition
\begin{equation}\label{Eq8}
H=\sum_{n=1}^{k}\lambda_n |u_n\rangle\langle u_n|
\end{equation}
with $k<N$ and $\lambda_n>0$ for $\forall n=\{1,2,\cdots,k\}$, i.e., the Hamiltonian has $N-k$ zero eigenvalues. Let us show how $\{|u_1\rangle,|u_2\rangle,\cdots,|u_{k}\rangle\}$ can be expanded to a set of orthogonal complete vectors $\{|u_1\rangle,|u_2\rangle,\cdots,|u_{N}\rangle\}$ via the QPE algorithm. We can write the input state $|u\rangle_s$ prepared by the QRAM model in the second register as the linear combination of these orthogonal complete vectors, i.e.,
\begin{equation}
	\label{eq10}
	|u\rangle_s=\sum_{n=1}^{N}b_n|u_n\rangle=\sum_{n=1}^{k}b_n|u_n\rangle + \sum_{n=k+1}^{N}b_n|u_n\rangle.
\end{equation}
Then, the output state of the QPE algorithm is
\begin{equation}\label{Eq11}
	\sum_{n=1}^{k}b_n|\tilde{\lambda_n}\rangle|u_n\rangle + \sum_{n=k+1}^{N} b_n|0\rangle|u_n\rangle,
\end{equation}
which is after the inverse quantum Fourier transform and before the measurement.  Here, $|\tilde{\lambda_n}\rangle$ and $|0\rangle$ denote the state of the first register. $|u_n\rangle$ is the state of the second register.

We then make measurement in the first register for  the state in Eq.~(\ref{Eq11}) in the computational basis. If the outcome is $0$, then the state of the second register will collapse into
\begin{equation}
	\label{eq7}
	|\psi\rangle = \frac{\sum_{n=k+1}^{N} b_n|u_n\rangle}{\parallel \sum_{n=k+1}^{N} b_n|u_n\rangle\parallel}.
\end{equation}
It is clear that $|\psi\rangle$ is strictly orthogonal with $|u_n\rangle$ for $\forall n=\{1,2,\cdots,k\}$ as
\begin{equation}
	\langle u_{n}|u_{n'}\rangle=0, \;\;\forall n'> k.
\end{equation}
Thus,  a quantum state $|\psi\rangle$, which is orthogonal with arbitrary state $|u_n\rangle$ with $n\leq k$, is constructed. After we obtain $|\psi\rangle$, the input state $|u\rangle$ in Eq.~(\ref{eq10}) can be represented by linear combinations of $|\psi\rangle$ and $\{|u_1\rangle,|u_2\rangle,\cdots,|u_{k}\rangle\}$. Then we can run the QPE algorithm again to find another state which is orthogonal to the states  $|\psi\rangle$ and $\{|u_1\rangle,|u_2\rangle,\cdots,|u_{k}\rangle\}$. This process can be used to construct the bases in Gram-Schmidt orthogonalization procedure. When the first $k$ bases $\{|u_1\rangle,|u_2\rangle,\cdots,|u_{k}\rangle\}$ are known, the $\left(k+1\right)$th base $|u_{k+1}\rangle$ can be defined as $|\psi\rangle$ as in Eq.~(\ref{eq7}). Thus, we construct the $\left(k+1\right)$th base by using $k$ former constructed bases and the input state $|u\rangle$.

Based on the principle above, let us now study how to construct orthogonal normalized bases by using a set of $N$-dimensional vectors  $S=\{a_1,a_2,\cdots, a_{M}\}$ as in Problem \ref{define1}. We find that one qubit in the first register is enough to realize our algorithm, thus the qubit number in the first register is always taken as one in the following description.  In our algorithm,  $N$ components $a_{nm}$ with $n=1,\cdots, N$ of each $N$-dimensional vector $a_{m}$  is encoded in a quantum state $|a_m\rangle$ via the computational basis $|n^{\prime}\rangle\equiv\{0,1\}^{\otimes \log_2N}$ of $\log_2N$ qubits in the second register as
\begin{equation}\label{Eq13}
|a_m\rangle=\frac{1}{{\parallel a_m\parallel}}\sum_{n^{\prime}=0}^{N-1}a_{n^{\prime}m}|n^{\prime}\rangle\equiv\frac{1}{{\parallel a_m\parallel}}\sum_{n=1}^{N}a_{nm}|n-1\rangle.
\end{equation}
It is clear that $|n-1\rangle\equiv |n^{\prime}\rangle$ for $n=1,\cdots, N$.

For the case that $\{a_1,a_2,\cdots, a_{M}\}$ is a set of linearly independent vectors,  orthogonal normalized bases $\{u_{1},\cdots, u_{M} \}$ can be constructed as follows. We first set $|u_{1}\rangle\equiv |a_1\rangle$. Then $a_{1}$ is encoded by $|u_1\rangle$, which corresponds to a normalized base $u_{1}$. The $n$th component $u_{n1}$ of the vector base $u_{1}$ is $u_{n1}\equiv\langle n-1 |u_{1}\rangle$ with $n=1,\cdots, N$.  Based on $|u_{1}\rangle$, we can successively construct $|u_2\rangle,\cdots,|u_M\rangle$ with $\langle u_{i}|u_{j}\rangle=\delta_{ij}$ by using $|a_1\rangle, \cdots, |a_{M}\rangle$  as follows. Suppose $\{|u_1\rangle,|u_2\rangle,\cdots,|u_{k}\rangle\}$ has been constructed and encode the orthogonal normalized bases $\{u_{1},\cdots, u_{k} \}$, with $u_{n_1}^{\dagger}u_{n_2}=\delta_{n_1n_2}$, $\forall n_1,n_2\leq k$ and  $span\{u_{1},\cdots, u_{k} \}\equiv span\{a_1,a_2,\cdots,a_{k}\}$. That is, $k$ orthogonal normalized bases $\{u_{1},\cdots, u_{k} \}$ have been constructed. Let us now construct the $(k+1)$th base $u_{k+1}$. Following the circuit of QPE,  we assume that the input state $|u\rangle_{s}$ of the second register is the state $|a_{k+1}\rangle$, i.e., $|u\rangle_{s}=|a_{k+1}\rangle$, which encodes the $N$-dimensional vector $a_{k+1}$ as shown in Eq.~(\ref{Eq13}). The Hamiltonian $H$ to obtain $u_{k+1}$ can be chosen as $H=\sum_{n=1}^{k}|u_n\rangle\langle u_n|$, which can be simulated as shown in Eq.~(\ref{Eq2-2}) and Eq.~(\ref{eq5}) by the qubitization. We assume that the system evolves with a time $t=\pi$, then the output state of the QPE circuit is
\begin{equation}
	|1\rangle\left(\sum_{n=1}^{k} \langle u_n|u\rangle_s |u_n\rangle\right)+|0\rangle\left(\frac{|u\rangle_{s}-\sum_{n=1}^{k} \langle u_n|u\rangle_s |u_n\rangle}{\parallel |u\rangle_s-\sum_{n=1}^{k} \langle u_n|u\rangle_s |u_n\rangle\parallel}\right),
\end{equation}
with $|u\rangle_{s}\equiv |a_{k+1}\rangle$. We measure the first register. If the outcome is $0$, we denote the state in the second register as
\begin{equation}
	|u_{k+1}\rangle = \frac{|a_{k+1}\rangle-\sum_{n=1}^{k} \langle u_n|a_{k+1}\rangle |u_n\rangle}{\parallel |a_{k+1}\rangle-\sum_{n=1}^{k} \langle u_n|a_{k+1}\rangle |u_n\rangle\parallel}.
\end{equation}
Thus, a state $|u_{k+1}\rangle$ is constructed and encodes an orthogonal normalized vector $u_{k+1}\in \mathbb{C}^{N}$ such that $a_{k+1}\in span\{u_1,\cdots,u_{k},u_{k+1}\}$ and $u_{k+1}^{\dagger}u_n=0,\forall n\leq k$.
Combining $span\{a_1,\cdots,a_{k}\}=span\{u_1, \cdots,u_{k}\}$ and $u_{n_1}^{\dagger}u_{n_2}=\delta_{n_1n_2}$, $\forall n_1,n_2\leq k$, we have
\begin{equation}
	span\{a_1,\cdots,a_{k+1}\}=span\{u_1, \cdots,u_{k+1}\}
\end{equation}
and
\begin{equation}
	u_{n_1}^{\dagger}u_{n_2}=\delta_{n_1n_2}, \forall n_1,n_2\leq k+1.
\end{equation}
The  quantum circuit  of the $(k+1)$th step of Gram-Schmidt process is  given in Fig.~\ref{fig3}.
\begin{figure}[h]
	\centering
	   	\includegraphics[width=\linewidth]{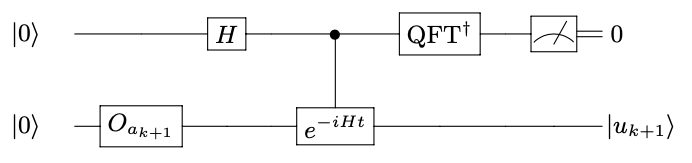}
	\caption{Circuit constructing the $(k+1)$th state $|u_{k+1}\rangle$ based on $\{|u_1\rangle,|u_2\rangle,\cdots,|u_{k}\rangle\}$ and $|a_{k+1}\rangle$. $O_{a_{k+1}}$ is the oracle preparing the quantum state $|a_{k+1}\rangle$, which encodes vector $a_{k+1}$ with $\lceil \log_2 N\rceil$ qubits in the second register. There is only one qubit in the first register. To avoid confusion, the top $H$ is Hadamard gate and the bottom $H$ is Hamiltonian. We post select the first register to be $0$ and readout $|u_{k+1}\rangle$ from the second register. In our algorithm ${\rm QFT}^{\dagger}$ is Hadamard gate because the size of the first register is $1$.}
	\label{fig3}
\end{figure}

 In the linearly dependent case for a set of the $N$-dimensional vectors  $S=\{a_1,a_2,\cdots,a_{M}\}$,  we also first set $|u_{1}\rangle=|a_{1}\rangle$. We assume that $a_{1}, \cdots, a_{k}$ with $k<M$ are linearly independent, then we can apply the algorithm of the linearly independent to these $k$ vectors and obtain $k$  orthogonal normalized vectors  $\{u_1,\cdots,u_{k}\}$ such that $span\{a_1,\cdots, a_{k}\}=span\{u_1,\cdots,u_{k}\}$.  Let us assume that the $(k+1)$th vector $a_{k+1}$ is not linearly independent and can be expressed by the former $k$ vectors, i.e., $a_{k+1}\in span\{a_1,\cdots, a_{k}\}=span\{u_1,\cdots,u_{k}\}$. This can be verified by following the same circuit as in the linearly independent case. That is, we encode the vector $a_{k+1}$ as the quantum state $|a_{k+1}\rangle$ via  Eq.~(\ref{Eq13}) and prepare the input state $|u\rangle_s$ as $|u\rangle_s\equiv |a_{k+1}\rangle$ in the second register. Thus, the output corresponding to the input state $|u\rangle_s=|a_{k+1}\rangle$ is
\begin{equation}
		|1\rangle\left(\sum_{n=1}^{k} \langle u_n|u\rangle_s |u_n\rangle\right)\equiv |1\rangle\left(\sum_{n=1}^{k} \langle u_n|a_{k+1}\rangle |u_n\rangle\right),
\end{equation}
in which the outcome of the measurement on the first register is $1$.  It is impossible that the outcome of the measurement on the first register is 0. Thus, if we run the circuit for a certain time and the outcome for each of the measurements is not $0$, we can infer $a_{k+1}\in span\{a_1,\cdots,a_{k}\}$. As $span\{a_1,\cdots,a_{k+1}\}=span\{u_1,\cdots,u_{k}\}$, we can just move on to the next step of Gram-Schmidt process to find a new $|u_{k+1}\rangle$ by considering the vector $a_{k+2}$. We prove in Appendix~\ref{seca1} that if the circuit is run for $\ln (1/\epsilon)/\epsilon$ times and each of the measurement outcomes is not 0, then with probability larger than $1-\epsilon$, $a_{k+1}$ is linearly dependent of $\{a_1,a_2,\cdots,a_k\}$.

We now summarize our quantum Gram-Schmidt process in Algorithm~\ref{alg3}. By running algorithm~\ref{alg3}, a series of orthogonal normalized bases $\{u_1,u_2,\cdots,u_T\}$ are constructed, satisfying $span\{a_1,a_2,\cdots,a_{M}\}=span\{u_1,u_2,\cdots,u_T\}$ with $T\leq M$, thus the task of orthogonal normalization of vector set is completed. But there are still some questions left. Why do we decide the linearly dependent case with $\ln(1/\epsilon)/\epsilon$ runs? Will errant Hamiltonian simulation lead to loss of orthogonality? How many quantum oracles and gates are needed? We answer these questions in the following section.
\begin{algorithm}[H]
	\caption{Quantum Gram-Schmidt Process for Vector Set Orthogonal Normalization}
	\label{alg3}
	\SetAlgoLined
		\KwIn {$S=\{a_1,a_2,\cdots,a_{M}\}$, $a_i\in \mathbb{C}^N$, error $\epsilon$}
		\KwOut {$S'=\{u_1,u_2,\cdots,u_T\}$ satisfying $u_{t_1}^{\dagger}u_{t_2}=O(\epsilon)$, $\forall t_1\neq t_2$, $span\{a_1,\cdots,a_{M}\}=span\{u_1,\cdots,u_T\}$ with succeeding probability larger than $1-\epsilon$}
		\BlankLine
		$|u_1\rangle\gets |a_1\rangle$\;
		$S'\gets \{u_1\equiv a_1/\parallel a_1\parallel \}$\;
		$H\gets |u_1\rangle\langle u_1|$, $t\gets \pi$, $\epsilon_0\gets \epsilon^4$\;
		\For{$k=1$ \textbf{to} $M-1$}{
		\For{$count=0$ \textbf{to} $1/\epsilon \ln\left(1/\epsilon\right)$}{
		\tcp{count is a counting variable with no other meaning.}
		construct $e^{-iHt}$ to $\epsilon_0$ with qubitization\;
		run circuit with oracle $O_{a_{k+1}}$\;
		measure 1st register to get result $x$\;
		\If{$x=0$}{
		measure 2nd register and get $|\psi\rangle$\;
		readout $|\psi\rangle$ in computational basis as $\psi$\;
		$S'\gets S'\cup\{\psi\}$\;
		$H \gets H+|\psi\rangle\langle \psi|$\;
		\textbf{break}\;
	}
}
}
\textbf{return} $S'$\;
\end{algorithm}

\subsection{Complexity Analysis}\label{4b}

We first provide some lemmas,  then we give our main theorem and prove it based on these lemmas.
\begin{lemma}[Spectral Norm on Kronecker Product]
	\label{lm3}
Suppose $A$ and $B$ are square matrices. Then $\parallel A\otimes B\parallel =\parallel A\parallel  \cdot \parallel B\parallel $.
\end{lemma}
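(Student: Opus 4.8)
The plan is to reduce the claim to the well-known multiplicativity of the largest singular value under tensor products. Recall that for a square matrix $A$, the spectral norm $\parallel A\parallel$ equals the largest singular value $\sigma_{\max}(A)$, equivalently $\parallel A\parallel=\sqrt{\lambda_{\max}(A^{\dagger}A)}$. So first I would write $(A\otimes B)^{\dagger}(A\otimes B)=(A^{\dagger}\otimes B^{\dagger})(A\otimes B)=(A^{\dagger}A)\otimes(B^{\dagger}B)$, using the mixed-product property $(X\otimes Y)(Z\otimes W)=(XZ)\otimes(YW)$. This moves the problem from $A\otimes B$ to the positive semidefinite matrix $(A^{\dagger}A)\otimes(B^{\dagger}B)$, whose largest eigenvalue I need to identify.

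Next I would use the standard fact about eigenvalues of Kronecker products: if $P$ has eigenvalues $\{\mu_i\}$ and $Q$ has eigenvalues $\{\nu_j\}$, then $P\otimes Q$ has eigenvalues $\{\mu_i\nu_j\}$ over all pairs $(i,j)$. Applying this to $P=A^{\dagger}A$ and $Q=B^{\dagger}B$, both of which are positive semidefinite so all $\mu_i,\nu_j\ge 0$, the largest eigenvalue of the product is $\lambda_{\max}(A^{\dagger}A)\,\lambda_{\max}(B^{\dagger}B)$, since the product of two lists of nonnegative numbers is maximized by pairing the two maxima. Taking square roots gives
\begin{equation}
\parallel A\otimes B\parallel=\sqrt{\lambda_{\max}(A^{\dagger}A)\,\lambda_{\max}(B^{\dagger}B)}=\parallel A\parallel\cdot\parallel B\parallel,
\end{equation}
which is the desired identity. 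Alternatively, one can avoid citing the eigenvalue-of-tensor-product fact and argue directly: choose unit vectors $x,y$ with $\parallel Ax\parallel=\parallel A\parallel$ and $\parallel By\parallel=\parallel B\parallel$; then $x\otimes y$ is a unit vector and $\parallel(A\otimes B)(x\otimes y)\parallel=\parallel Ax\otimes By\parallel=\parallel Ax\parallel\,\parallel By\parallel=\parallel A\parallel\parallel B\parallel$, giving $\parallel A\otimes B\parallel\ge\parallel A\parallel\parallel B\parallel$; for the reverse inequality, any unit vector $z$ in the tensor space can be expanded in a Schmidt-type decomposition $z=\sum_k c_k\, x_k\otimes y_k$ with orthonormal families and $\sum_k|c_k|^2=1$, and a short estimate bounds $\parallel(A\otimes B)z\parallel$ by $\parallel A\parallel\parallel B\parallel$.

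The only mild obstacle is being careful about which elementary fact to invoke cleanly: the submultiplicative inequality $\parallel A\otimes B\parallel\le\parallel A\parallel\parallel B\parallel$ is immediate from operator-norm submultiplicativity once one writes $A\otimes B=(A\otimes I)(I\otimes B)$ together with $\parallel A\otimes I\parallel=\parallel A\parallel$ and $\parallel I\otimes B\parallel=\parallel B\parallel$, while the matching lower bound is the rank-one test vector above; so in fact the cleanest writeup combines the $(A\otimes I)(I\otimes B)$ factorization for $\le$ with the product test state for $\ge$, and no spectral-decomposition machinery is strictly needed. Since $A$ and $B$ are assumed square (hence so is $A\otimes B$), there are no domain/codomain subtleties, and the result follows.
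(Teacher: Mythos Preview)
Your primary argument is correct and essentially identical to the paper's proof: write $(A\otimes B)^{\dagger}(A\otimes B)=(A^{\dagger}A)\otimes(B^{\dagger}B)$ via the mixed-product property, use that the eigenvalues of a Kronecker product of positive semidefinite matrices are the pairwise products so that $\lambda_{\max}$ factors, and take square roots. The alternative routes you sketch (product test vector for the lower bound, the factorization $A\otimes B=(A\otimes I)(I\otimes B)$ for the upper bound) are also valid but go beyond what the paper does.
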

\begin{lemma}[Error of quantum circuit on Errant Hamiltonian Simulation]
	\label{lm4}
Suppose we use the Hamiltonian simulation tools to simulate Hamiltonian $H=\sum_{n=1}^{k}|u_n\rangle\langle u_n|$ for arbitrary time $t$ to error $\epsilon_0$ for each $k=1,2,\cdots,M$
	\begin{equation}
		\parallel e^{-iHt}-U\parallel <\epsilon_0,
	\end{equation}
	then the error of unitary of the whole quantum circuit for constructing $|u_{k+1}\rangle$ as in Fig.~\ref{fig3} can be bounded with
	\begin{equation}
		\label{eq16}
		\parallel U_{\rm real}-U_{\rm exact}\parallel <\epsilon_0.
	\end{equation}
\end{lemma}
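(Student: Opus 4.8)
The plan is to localize precisely where the implemented circuit $U_{\rm real}$ and the ideal circuit $U_{\rm exact}$ of Fig.~\ref{fig3} differ, and then to show that the spectral-norm distance between the two full-circuit unitaries equals the distance between their Hamiltonian-simulation blocks alone, with no amplification. Since the first register carries a single qubit, the inverse QFT is merely a Hadamard $\mathcal{H}$, so I would write $U_{\rm exact}=(\mathcal{H}\otimes I_s)\,C(e^{-iHt})\,(\mathcal{H}\otimes I_s)\,(I_f\otimes O_{a_{k+1}})$, where $C(\cdot)=|0\rangle\langle 0|_f\otimes I_s+|1\rangle\langle 1|_f\otimes(\cdot)$ is the controlled operation acting on the second register (original plus qubitization ancillas, with $e^{-iHt}$ acting trivially on the ancillas, which start and end in $|0\rangle_a$). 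The implemented circuit $U_{\rm real}$ is identical except that $e^{-iHt}$ is replaced by the qubitized unitary $U$ satisfying the hypothesis $\|e^{-iHt}-U\|<\epsilon_0$; the two Hadamards and the state-preparation oracle $O_{a_{k+1}}$ are byte-for-byte the same in both circuits, so the error is entirely confined to the controlled-$U$ block.

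Next I would subtract: $U_{\rm real}-U_{\rm exact}=(\mathcal{H}\otimes I_s)\,\bigl(C(U)-C(e^{-iHt})\bigr)\,(\mathcal{H}\otimes I_s)\,(I_f\otimes O_{a_{k+1}})$. Because $\mathcal{H}\otimes I_s$ and $I_f\otimes O_{a_{k+1}}$ are unitary and the spectral norm is invariant under left and right multiplication by unitaries, this gives $\|U_{\rm real}-U_{\rm exact}\|=\|C(U)-C(e^{-iHt})\|$. Now the $|0\rangle\langle 0|_f$ block cancels in the difference of the two controlled operators, leaving $C(U)-C(e^{-iHt})=|1\rangle\langle 1|_f\otimes\bigl(U-e^{-iHt}\bigr)$. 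Applying Lemma~\ref{lm3} on the spectral norm of a Kronecker product, together with $\||1\rangle\langle 1|\|=1$, yields $\|U_{\rm real}-U_{\rm exact}\|=\|U-e^{-iHt}\|<\epsilon_0$, which is exactly Eq.~(\ref{eq16}).

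The algebra above is routine; the one place that needs a little care is the treatment of the qubitization ancillas, so that $e^{-iHt}$ and $U$ are genuinely compared on the \emph{same} Hilbert space. Here I would point out that on the relevant subspace (ancillas in $|0\rangle_a$) the ideal block is $e^{-iHt}\otimes I_a$ and that the guarantee of Lemma~\ref{lm1} / Corollary~16 of~\cite{low2019hamiltonian} is precisely a bound of the form $\|e^{-iHt}\otimes I_a-U\|<\epsilon_0$ on this enlarged register, so no hidden amplification enters through the ancilla space. I expect this bookkeeping about the ancillary register, rather than any sharp inequality, to be the main (mild) obstacle; the essential content is just that controlling a unitary and sandwiching it between unitaries cannot inflate an operator-norm error.
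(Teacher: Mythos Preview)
Your proposal is correct and follows essentially the same route as the paper: write the full circuit as (Hadamard/$\mathrm{QFT}^{\dagger}$)$\cdot$(controlled block)$\cdot$(Hadamard), observe that the difference of the two controlled blocks is $|1\rangle\langle 1|\otimes(U-e^{-iHt})$, invoke Lemma~\ref{lm3} to get $\|\Delta U\|<\epsilon_0$, and then use that the flanking gates are unitary (norm one) to conclude. The only cosmetic differences are that you include the state-preparation oracle explicitly and use unitary invariance to obtain an equality where the paper uses submultiplicativity to obtain an inequality, and you are more explicit about the qubitization ancillas; none of this changes the argument.
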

\begin{lemma}[Algorithm \ref{alg3} Generate Complete Vector Set]
	\label{lm6}
	Suppose we use Algorithm~\ref{alg3} for Problem~\ref{define1} to generate a series of states successively. Then $span\{a_1,a_2,\cdots,a_{M}\}=span\{u_1,u_2,\cdots,u_T\}$ is satisfied with the probability $\Omega(1)$.
\end{lemma}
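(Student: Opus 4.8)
# Proof Proposal for Lemma 3 (Algorithm 3 Generates Complete Vector Set)

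\textbf{Proof proposal.} The plan is to establish the set equality $span\{a_1,\dots,a_M\}=span\{u_1,\dots,u_T\}$ by proving the two inclusions separately, treating $span\{u_1,\dots,u_T\}\subseteq span\{a_1,\dots,a_M\}$ as the deterministic (easy) direction and concentrating the probabilistic analysis on the reverse inclusion. For the first inclusion I would induct on the step index: the base case $u_1=a_1/\parallel a_1\parallel$ lies in $span\{a_1\}$, and at step $k+1$, whenever the first-register measurement returns $0$ the post-measurement second-register vector is, up to normalization, $|a_{k+1}\rangle-\sum_{n\le k}\langle u_n|a_{k+1}\rangle|u_n\rangle$ as in Eq.~(\ref{eq7}), with an $O(\epsilon_0)$ correction from the errant Hamiltonian simulation controlled by Lemma~\ref{lm4}; by the inductive hypothesis every $|u_n\rangle$ with $n\le k$ lies in $span\{a_1,\dots,a_k\}$, so the newly constructed vector lies in $span\{a_1,\dots,a_{k+1}\}$. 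Skipped steps (all runs return $1$) only discard candidates, so this inclusion is preserved throughout, and holds with certainty up to the precision fixed by $\epsilon_0=\epsilon^4$.

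For the reverse inclusion it suffices to show that every $a_{k+1}$ that is genuinely linearly independent of $\{a_1,\dots,a_k\}$ is detected, i.e.\ at least one of the $\lceil\frac{1}{\epsilon}\ln\frac{1}{\epsilon}\rceil$ runs at step $k+1$ returns outcome $0$. If $a_{k+1}\notin span\{a_1,\dots,a_k\}=span\{u_1,\dots,u_k\}$, then, by the branch weights displayed in the circuit of Fig.~\ref{fig3}, the single-run probability of outcome $0$ equals $p_{k+1}=\bigl\|\,|a_{k+1}\rangle-\sum_{n\le k}\langle u_n|a_{k+1}\rangle|u_n\rangle\,\bigr\|^2>0$, perturbed by at most $O(\epsilon_0)$ under the errant simulation. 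Since the runs are independent, the probability that step $k+1$ misses an independent $a_{k+1}$ is $(1-p_{k+1})^{\lceil\frac{1}{\epsilon}\ln\frac{1}{\epsilon}\rceil}\le \exp\!\bigl(-p_{k+1}\cdot\tfrac{1}{\epsilon}\ln\tfrac{1}{\epsilon}\bigr)$. A union bound over the at most $M-1$ steps then bounds the probability that some independent column is missed — the only way the reverse inclusion can fail — by $(M-1)\exp\!\bigl(-\tfrac{p_{\min}}{\epsilon}\ln\tfrac{1}{\epsilon}\bigr)$, where $p_{\min}$ is the smallest residual weight among the steps processing a genuinely independent vector. The complementary false-positive event (outcome $0$ when $a_{k+1}$ is dependent) has probability $O(\epsilon_0)$ per run by Lemma~\ref{lm4}, contributing only $O(M\epsilon_0/\epsilon)=O(M\epsilon^3)$ overall, which is negligible against the previous term.

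The main obstacle is the lower bound on $p_{\min}$: for an arbitrary merely-full-rank input set the orthogonal residual of $a_{k+1}$ against $span\{u_1,\dots,u_k\}$ can be arbitrarily small, so one must either invoke a conditioning hypothesis on $S$ (making $p_{\min}$ bounded below by a fixed constant, or by a quantity whose reciprocal is absorbed into the chosen $\epsilon$), or reinterpret ``linearly dependent'' in the $\epsilon$-approximate sense used in the output specification of Algorithm~\ref{alg3} — so that a residual with $p_{k+1}<\epsilon$ may be classified as dependent and dropped, perturbing the span by only $O(\epsilon)$, exactly as anticipated by the Appendix~\ref{seca1} analysis that pins down the repetition count $\frac{1}{\epsilon}\ln\frac{1}{\epsilon}$. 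With $p_{\min}\ge\epsilon$ each step fails with probability at most $e^{-\ln(1/\epsilon)}=\epsilon$, so the total failure probability is $O(M\epsilon)=o(1)$ for $\epsilon$ small relative to $1/M$, giving success probability $\Omega(1)$ as claimed. A secondary technicality is to check that the $O(\epsilon_0)$ perturbations from Lemma~\ref{lm4} never collapse a nonzero residual to exactly zero nor create a spurious zero-outcome branch of non-negligible weight; this is immediate since $\epsilon_0=\epsilon^4\ll\epsilon\le p_{\min}$.
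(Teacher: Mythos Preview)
Your argument is sound and reaches the same conclusion, but by a genuinely different route than the paper. The paper's proof (Appendix~\ref{seca1}) is \emph{Bayesian}: it places a uniform prior on the single-shot success probability $p=1-\sum_{n\le t}|\langle u_n|a_{k+1}\rangle|^2$, derives the posterior $P(p\mid W>w)=(w+1)(1-p)^w$ after $w$ consecutive outcomes of~$1$, and then shows via an auxiliary calculus inequality that $\int_0^{\epsilon}P(p\mid W>w)\,dp=1-(1-\epsilon)^{1+w}>1-\epsilon$ when $w=\tfrac{1}{\epsilon}\ln\tfrac{1}{\epsilon}$; hence, conditional on never seeing~$0$, the residual is below $\epsilon$ with posterior probability exceeding $1-\epsilon$. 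Your treatment is purely \emph{frequentist}: you fix $p_{k+1}$, bound the miss probability by $(1-p_{k+1})^w\le e^{-p_{k+1}w}$, and union-bound across the $M$ steps. This forces you to confront explicitly the obstacle that $p_{\min}$ has no a~priori lower bound --- an issue the paper's uniform prior quietly absorbs by randomising $p$ itself. Your proposed resolution, namely to reclassify residuals with $p_{k+1}<\epsilon$ as $\epsilon$-approximately dependent at the cost of an $O(\epsilon)$ span perturbation, is precisely the substance of what the paper's posterior calculation delivers, only stated more transparently. Both arguments ultimately require $\epsilon$ small relative to $1/M$ for the accumulated failure probability to stay bounded away from~$1$; neither makes this fully explicit in the lemma statement. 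Your version has the advantage of being methodologically standard and of isolating the conditioning assumption cleanly; the paper's version has the advantage of motivating the specific repetition count $\tfrac{1}{\epsilon}\ln\tfrac{1}{\epsilon}$ directly from the posterior integral.
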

\begin{lemma}[Loss of Orthogonality Based on Algorithm \ref{alg3}]
	\label{lm5}
	Suppose we use Algorithm~\ref{alg3} for Problem \ref{define1} to generate  a series of states successively, then the generated states satisfy
	\begin{equation}
		\langle u_{t_1}|u_{t_2}\rangle =O(\epsilon) \;\; \forall t_1\neq t_2.
	\end{equation}
with probability $\Omega(1)$. Therefore,
\begin{equation}
	u_{t_1}^{\dagger}u_{t_2}=O(\epsilon) \;\; \forall t_1\neq t_2
\end{equation}
satisfies with probability $\Omega(1)$
\end{lemma}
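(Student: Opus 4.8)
The plan is to track how far the state actually produced by the circuit deviates from the ideal $|u_{k+1}\rangle$ and then control the inner products it forms with the previously constructed states. First I would fix the $(k+1)$th step and write $U_{\rm exact}$ for the ideal unitary of the circuit in Fig.~\ref{fig3} (built from the exact Hamiltonian $H=\sum_{n=1}^{k}|u_n\rangle\langle u_n|$ acting for time $t=\pi$) and $U_{\rm real}$ for the unitary actually implemented with a qubitized Hamiltonian simulation of accuracy $\epsilon_0$. By Lemma~\ref{lm4} we have $\parallel U_{\rm real}-U_{\rm exact}\parallel<\epsilon_0$, and Lemma~\ref{lm3} lets me push this bound through the tensor-product structure (the controlled-$U$ block tensored with identities on the other register, the Hadamards, etc.) without picking up extra factors. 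Applying both unitaries to the same input $|0\rangle|a_{k+1}\rangle$ and then projecting the first register onto $|0\rangle$, the post-measurement (normalized) states $|\tilde u_{k+1}\rangle$ and $|u_{k+1}\rangle$ differ by $O(\epsilon_0/p)$ in norm, where $p$ is the success probability of obtaining outcome $0$; I would argue $p=\Omega(1)$ generically (or under the mild genericity hypothesis implicit in Lemma~\ref{lm6}), so $\parallel |\tilde u_{k+1}\rangle-|u_{k+1}\rangle\parallel = O(\epsilon_0)$.

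Next I would bound the spurious overlaps. For $n\le k$ the ideal state satisfies $\langle u_n|u_{k+1}\rangle=0$ exactly, so
\begin{equation}
|\langle u_n|\tilde u_{k+1}\rangle| = |\langle u_n|(\,|\tilde u_{k+1}\rangle-|u_{k+1}\rangle)| \le \parallel |\tilde u_{k+1}\rangle-|u_{k+1}\rangle\parallel = O(\epsilon_0).
\end{equation}
But this is the overlap with the \emph{ideal} $|u_n\rangle$, whereas in the algorithm the Hamiltonian at step $k$ is assembled from the previously produced (already slightly imperfect) states $|\tilde u_1\rangle,\dots,|\tilde u_k\rangle$, not the ideal ones. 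So the argument has to be run inductively: assuming all pairwise overlaps among $|\tilde u_1\rangle,\dots,|\tilde u_k\rangle$ are $O(\epsilon)$, I would show that the Hamiltonian $\tilde H=\sum_{n=1}^k|\tilde u_n\rangle\langle\tilde u_n|$ is within $O(k\epsilon)$ of a genuine projector onto a $k$-dimensional subspace, so its exact exponential still annihilates (to $O(k\epsilon)$) each $|\tilde u_n\rangle$, and combining with the $O(\epsilon_0)$ simulation error gives $\langle\tilde u_n|\tilde u_{k+1}\rangle=O(k\epsilon+\epsilon_0)$ for every $n\le k$. Choosing $\epsilon_0=\epsilon^4$ as in Algorithm~\ref{alg3} and $k\le M$, and noting the errors accumulate at most linearly over the $O(M)$ steps, the total loss of orthogonality stays $O(\epsilon)$ (absorbing polynomial factors in $M,N$ into the constant, or keeping them explicit if the paper's conventions require). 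The final line $u_{t_1}^\dagger u_{t_2}=O(\epsilon)$ is then immediate, since $u_t$ is by definition the classical vector whose entries are the amplitudes $\langle n-1|\tilde u_t\rangle$, so $u_{t_1}^\dagger u_{t_2}=\langle\tilde u_{t_1}|\tilde u_{t_2}\rangle$.

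The $\Omega(1)$ probability qualifier I would handle by a union bound: each of the $O(M)$ QPE rounds succeeds (yields outcome $0$ when it should) with probability $\Omega(1)$, and the genericity conditions needed for Lemma~\ref{lm6} and for $p=\Omega(1)$ hold with probability $\Omega(1)$ over the (effectively generic) input; intersecting these $O(M)$ events still leaves probability bounded below by a constant, which is all that $\Omega(1)$ asserts here.

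The main obstacle I expect is precisely the inductive book-keeping in the second paragraph: the Hamiltonian driving step $k$ is built from imperfect states, so $\tilde H$ is not exactly a projector and $e^{-i\tilde H\pi}$ does not exactly fix the "kept" eigenspace. Quantifying how much the spectral gap of $\tilde H$ (between its cluster of eigenvalues near $1$ and near $0$) degrades under an $O(k\epsilon)$ perturbation — and checking that this does not cause the error to blow up geometrically across the $M$ steps — is the delicate part; everything else is a routine propagation of operator-norm bounds via Lemmas~\ref{lm3} and~\ref{lm4}.
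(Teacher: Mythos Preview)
Your global structure --- bound the whole-circuit error by $\epsilon_0$ via Lemma~\ref{lm4}, project the first register onto $|0\rangle$, renormalize, and read off the overlap with the earlier $|u_n\rangle$ --- is exactly the paper's. The gap is your treatment of the success probability $p$. You assert $p=\Omega(1)$ ``generically'' and conclude $\parallel|\tilde u_{k+1}\rangle-|u_{k+1}\rangle\parallel=O(\epsilon_0)$, but nothing in the hypotheses of Lemma~\ref{lm5} forbids $a_{k+1}$ from lying almost in $\mathrm{span}\{u_1,\dots,u_k\}$, in which case $p$ is tiny and the post-selection blow-up $\epsilon_0/p$ (in the paper it is $2\epsilon_0/p^{3/2}$ once the normalization is handled carefully) is uncontrolled. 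This is not cosmetic: if $p=\Omega(1)$ were really available, then $\epsilon_0=\epsilon$ would already yield $O(\epsilon)$ overlaps, and Algorithm~\ref{alg3}'s choice $\epsilon_0=\epsilon^4$ would be unmotivated.

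The paper's route here is the piece you are missing. It does \emph{not} lower-bound $p$ a priori. Instead it conditions on the event that outcome $0$ is first observed at trial $W=w$, with $w\le (1/\epsilon)\ln(1/\epsilon)$ by the algorithm's cap, and runs a Bayesian argument (uniform prior on $p$, posterior $P(p\mid W{=}w)=w(w{+}1)\,p(1-p)^{w-1}$) to conclude that $p>\epsilon^{1/2}/w\ge \epsilon^{3/2}/\ln(1/\epsilon)$ with probability at least $1-\epsilon$. Feeding this into $|\langle u_n|u_{t+1}\rangle_{\rm real}|<2\epsilon_0/p^{3/2}$ gives $O\bigl(\epsilon_0\,\epsilon^{-9/4}(\ln(1/\epsilon))^{3/2}\bigr)=O(\epsilon_0\epsilon^{-3})$, and \emph{that} is why $\epsilon_0=\epsilon^4$ is set in line~3 of the algorithm. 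Your inductive worry about $\tilde H=\sum_n|\tilde u_n\rangle\langle\tilde u_n|$ failing to be an exact projector is legitimate, but the paper simply sidesteps it: its ``ideal'' circuit is defined with the exact $e^{-i\tilde H\pi}$ for the already-constructed states and it asserts $\langle u_n|u_{t+1}\rangle_{\rm ideal}=0$ at that level, so the only error tracked in the induction step is the simulation error $\epsilon_0$.
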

Detailed proofs of the lemmas above are given in Appendix \ref{proof1}. Now we give our main result.
\newtheorem{theorem}{Theorem}
\begin{theorem}[Quantum Gram-Schmidt Orthogonalization]
	\label{thm2}
	If we consider a vector set orthogonal normalization problem as defined in Problem \ref{define1}, then there exists a quantum algorithm to generate vector set $S'=\{u_1, u_2,\cdots,u_T\}$ which satisfies $u_{t_1}^{\dagger}u_{t_2}=O(\epsilon)$, $\forall t_1\neq t_2$ and $span\{a_1,a_2,\cdots,a_{M}\}=span\{u_1,u_2,\cdots,u_T\}$, succeeding with the probability larger than $\Omega(1)$. The query complexity of the algorithm is
	\begin{equation}
		O\left(\frac{M^2}{\epsilon}\left(\log_2\frac{1}{\epsilon}\right)^2\right)
	\end{equation}
and $\lceil\log_2 M\rceil+ \lceil\log_2 N \rceil+ 3$ qubits are needed. The total number of additional two-qubit quantum gates is larger than the query complexity by a factor $\log_2 M.$
\end{theorem}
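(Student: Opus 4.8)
The plan is to obtain the theorem as a corollary of the lemmas just stated together with Lemma~\ref{lm1}: correctness is read off directly from Lemmas~\ref{lm6} and~\ref{lm5}, while the three quantitative claims reduce to substituting the parameters of the Hamiltonian used in the circuit of Fig.~\ref{fig3} into Lemma~\ref{lm1} and then counting how often the two nested loops of Algorithm~\ref{alg3} are executed.

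First I would fix the algorithm to be Algorithm~\ref{alg3} itself and check that its output meets the specification. The equality of spans, $span\{a_1,\dots,a_M\}=span\{u_1,\dots,u_T\}$, is precisely Lemma~\ref{lm6}, which already accounts for the linear-dependence test: by Appendix~\ref{seca1}, $\ln(1/\epsilon)/\epsilon$ consecutive nonzero measurement outcomes certify that $a_{k+1}\in span\{a_1,\dots,a_k\}$ except with probability $\epsilon$, so skipping such a vector does not disturb the span. The near-orthogonality $u_{t_1}^{\dagger}u_{t_2}=O(\epsilon)$ for $t_1\ne t_2$ under the imperfect, error-$\epsilon_0$ Hamiltonian simulation is exactly Lemma~\ref{lm5}, which rests on Lemmas~\ref{lm3} and~\ref{lm4} and on the choice $\epsilon_0=\epsilon^4$. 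It then remains to note that both hold simultaneously with constant probability; since each of these two whole-run guarantees individually fails only with probability bounded away from $1$ (the explicit constants coming from the Appendix analyses), their conjunction still has probability $\Omega(1)$, which is the asserted success guarantee.

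For the query complexity I would analyze one pass through the circuit of Fig.~\ref{fig3} first. Its only nontrivial component is a single controlled-$e^{-iHt}$ with $t=\pi$ and $H=\sum_{n=1}^{k}|u_n\rangle\langle u_n|$. Writing each projector as $|u_n\rangle\langle u_n|=\tfrac12 I+\tfrac12\bigl(2|u_n\rangle\langle u_n|-I\bigr)$ exhibits $H$ as a linear combination of $d=k+1$ unitaries (the identity together with $k$ reflections about the already-known states $|u_n\rangle$) with normalization $\alpha=\sum_l|\alpha_l|=k$. Lemma~\ref{lm1} with error $\epsilon_0=\epsilon^4$ then bounds the cost of this simulation by $O(\alpha t+\log_2(1/\epsilon_0))=O(k+\log_2(1/\epsilon))$ oracle queries. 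The inner loop of Algorithm~\ref{alg3} runs at most $\ln(1/\epsilon)/\epsilon$ times (and does run that many times in the linearly dependent branch), while the outer loop runs for $k=1,\dots,M-1$ with $k\le M$, so the total number of queries is
\begin{equation}
\sum_{k=1}^{M-1}\frac{\ln(1/\epsilon)}{\epsilon}\,O\!\left(k+\log_2\frac{1}{\epsilon}\right)=O\!\left(\frac{M^2}{\epsilon}\log_2\frac{1}{\epsilon}\right)+O\!\left(\frac{M}{\epsilon}\left(\log_2\frac{1}{\epsilon}\right)^{2}\right),
\end{equation}
and since $M\ge1$ and $\log_2(1/\epsilon)\ge1$ each summand is $O(\tfrac{M^2}{\epsilon}(\log_2\tfrac{1}{\epsilon})^{2})$, which gives the stated query complexity.

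Finally, the qubit count is the single control qubit of the first register plus the $\lceil\log_2 N\rceil+\lceil\log_2 d\rceil+2$ qubits of the second register required by Lemma~\ref{lm1}; with $d=k+1=O(M)$ this equals $\lceil\log_2 M\rceil+\lceil\log_2 N\rceil+3$. For the two-qubit gate count, Lemma~\ref{lm1} gives $O\bigl(\log_2 d\,(\alpha t+\log_2(1/\epsilon_0))\bigr)$ additional two-qubit gates per Hamiltonian simulation, i.e.\ exactly a factor $\log_2 M$ above its query cost, whereas the single-qubit inverse QFT contributes only one Hadamard per pass; summing over all passes multiplies the query complexity by $\Theta(\log_2 M)$, as claimed. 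I expect the genuinely delicate point here — the remainder being a routine arithmetic summation once the linear-combination-of-unitaries form fixes $\alpha=O(M)$ and $d=O(M)$ — to be the probabilistic bookkeeping: confirming that intersecting the span guarantee of Lemma~\ref{lm6} with the orthogonality guarantee of Lemma~\ref{lm5} (and with correctness of the linear-dependence test) still leaves a positive absolute constant lower bound on the overall success probability.
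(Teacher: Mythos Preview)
Your proposal is correct and follows essentially the same route as the paper: correctness is read off from Lemmas~\ref{lm6} and~\ref{lm5}, and the resource counts come from plugging the parameters $t=\pi$, $\alpha=k$, $\epsilon_0=\epsilon^4$ into Lemma~\ref{lm1} and summing over the two nested loops of Algorithm~\ref{alg3}. The one place you are more careful than the paper is the LCU step: the paper simply declares $H=\sum_{n=1}^{k}|u_n\rangle\langle u_n|$ an ``LCU Hamiltonian'' with $\alpha=k$, whereas you actually exhibit each projector as $\tfrac12 I+\tfrac12(2|u_n\rangle\langle u_n|-I)$ so that the terms are genuinely unitary and the constants $d=k+1$, $\alpha=k$ are justified; this also makes the origin of the $\lceil\log_2 M\rceil$ ancilla count explicit, which the paper leaves implicit.
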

\begin{proof}
	The correctness of Algorithm \ref{alg3} is proved through Lemma \ref{lm6} and \ref{lm5}. Now we calculate the complexity of the algorithm. \\

To simulate LCU Hamiltonian $H=\sum_{n=1}^{k}|u_n\rangle\langle u_n|$ for time $t=\pi$ to accuracy $\epsilon_0=\epsilon^4$ as in Algorithm \ref{alg3}, from Lemma \ref{lm1} we know,
\begin{equation}
O\left(\alpha t+\log_2\left(\frac{1}{\epsilon_0}\right)\right)=O\left(k\pi+4\log_2\left(\frac{1}{\epsilon}\right)\right)
\end{equation}
queries to quantum oracles and $\lceil \log_2 M\rceil+ \lceil \log_2 N\rceil+ 2$ qubits are needed. \\

Thus, The total number of qubits required for the quantum circuit is $\lceil\log_2 M\rceil + \lceil\log_2 N\rceil+ 3$. The number of accesses to quantum oracles required to run the quantum circuit for one time is $O\left(k\pi+4\log_2\left(1/\epsilon\right)\right)$ and the quantum circuit needs to be repeatedly run for at most $\ln (1/\epsilon)/\epsilon$ times for $\forall k$. Therefore, there is an upper bound on the quantum query complexity, i.e., the total number $N_O$ of calls to oracles.
\begin{align}
	N_O&\leq\sum_{k=1}^{M} \frac{1}{\epsilon}\ln\left(\frac{1}{\epsilon}\right) * O\left(k\pi+4\log_2\left(\frac{1}{\epsilon}\right)\right)\nonumber\\
	&=O\left(\frac{M^2}{\epsilon}\left(\log_2\frac{1}{\epsilon}\right)\right)+O\left(\frac{M}{\epsilon}\left(\log_2\frac{1}{\epsilon}\right)^2\right)\nonumber\\
	&=O\left(\frac{M^2}{\epsilon}\left(\log_2\frac{1}{\epsilon}\right)^2\right).
\end{align}
Moreover, there is an upper bound on the total number of additional two-qubit quantum gates $N_G$ with
\begin{align}
	N_G &\leq \sum_{k=1}^{M} \frac{1}{\epsilon}\ln\left(\frac{1}{\epsilon}\right) * O\left(\log_2 k\left(k\pi+4\log_2\left(\frac{1}{\epsilon}\right)\right)\right)\nonumber\\
	&=O\left(\frac{M^2\log_2 M}{\epsilon}\left(\log_2\frac{1}{\epsilon}\right)\right)\nonumber\\
    &+O\left(\frac{M\log_2 M}{\epsilon}\left(\log_2\frac{1}{\epsilon}\right)^2\right)\nonumber\\
	&=O\left(\frac{M^2\log_2 M}{\epsilon}\left(\log_2\frac{1}{\epsilon}\right)^2\right).
\end{align}

Thus, the proof of theorem~\ref{thm2} is completed.
\end{proof}

\subsection{Validations of algorithm}\label{4c}

In this section, we further show the correctness and the performance of our proposed algorithms under different situations. We first calculate the orthogonality of generated vectors $S'=\{u_1,u_2,\cdots,u_{T}\}$ with different dimension of  input vectors $S=\{a_1,a_2,\cdots,a_{M}\}$. Without loss of generality, we assume that the dimension $N$ of the input vector is the same as the number $M$ of input vectors. The loss of orthogonality of the generated $S'=\{u_1,u_2,\cdots,u_{T}\}$ is calculated by
\begin{equation}
	\eta=\parallel S'^{\dagger}S'-I\parallel.
\end{equation}

We randomly generate $N$ vectors $\{a_1,a_2,\cdots,a_{N}\}$ whose dimension is $N$, then line them  into a matrix $\cal{S}\in \mathbb{C}^{N\times N}$ from $a_1$ to $a_N$, in which each column denotes a vector. The conditional number of the matrix $\cal{S}$ is fixed to be $100$. We use classical Gram-Schmidt process and quantum Gram-schmidt process respectively to generate a set of orthogonal normalized vectors $S'=\{u_1,u_2,\cdots,u_{T}\}$. Then the loss of orthogonality of $\{u_1,u_2,\cdots,u_{T}\}$ is calculated. The parameter $\epsilon$ in the quantum Gram-Schmidt process is chosen to be $10^{-4}$. We change the value of $N$ and summarize the results in Fig.~\ref{res6}.

Figure~\ref{res6} shows that the loss of orthogonality of $\{u_1,u_2,\cdots,u_{T}\}$ generated by the quantum Gram-Schmidt process is a little larger than that generated by the classical Gram-Schmidt process. It is because the whole process for the quantum case is simulated on a classical computer, so there is a rounding error in each step of the calculation. The simulation of quantum circuits on a classical computer is inefficient; this requires more computational steps. Therefore, the error of quantum Gram-Schmidt process is a little larger than that of classical Gram-Schmidt process because of the larger rounding error. However, even with a large rounding error, the error of the quantum Gram-Schmidt process is less than $10^{-10}$, which indicates that there is no system error. Thus, the correctness of Algorithm~\ref{alg3} is proved.

We also study the performance of the quantum Gram-Schmidt process when linear independence of $S=\{a_1,a_2,\cdots,a_{N}\}$ is weak. In this case, the conditional number of the matrix $\cal{S}$~$=(a_1,a_2,\cdots,a_{N})$ is large. We randomly generate matrix $\cal{S}$ with different conditional numbers, fixing $\cal{S}$ to be an $8\times 8$ matrix. We use quantum Gram-Schmidt process to generate a series of orthogonal normalized vectors $S'=\{u_1,u_2,\cdots,u_{T}\}$ for $S=\{a_1,a_2,\cdots,a_{N}\}$. The loss of orthogonality is calculated. The parameter $\epsilon$ in our quantum algorithm is chosen to be $10^{-4}$. We change the value of the conditional number and summarize the results in Fig.~\ref{res7}. It is shown from Fig.~\ref{res7} that the loss of orthogonality of $S^{\prime}=\{u_1,u_2,\cdots,u_{T}\}$ remains small even  linear independence of $S=\{a_1,a_2,\cdots,a_{N}\}$ is weak. This shows the robustness of Algorithm~\ref{alg3} in the case that the linear independence of input vectors is weak.

It is noted that we only examine whether Algorithm~\ref{alg3} generates a set of orthogonal normalized vectors, but the completeness of the generated vectors is not examined, i.e.,
\begin{equation}
	span\{a_1,a_2,\cdots,a_{M}\}=span\{u_1,u_2,\cdots,u_{T}\}
\end{equation}
is not proved numerically. This is because we will propose quantum Gram-Schmidt process based QR decomposition algorithm in the next section. When we numerically prove the correctness of our quantum QR decomposition algorithm, the completeness of the generated vectors is verified at the same time.

\begin{figure}[tbph]
	\centering
	\subfigure {\
		\begin{minipage}[b]{\linewidth}
			\centering
			\begin{overpic}[scale=0.5]{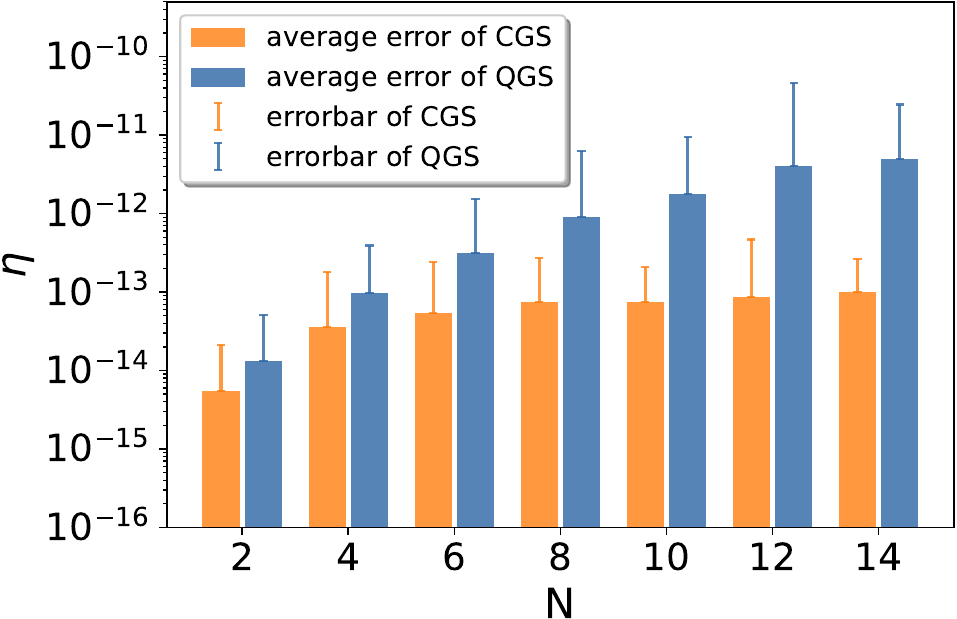}
				\put(92,60){\large\textbf{(a)}}
			\end{overpic}
		\label{res6}
		\end{minipage}
	}
	\subfigure {\
		\begin{minipage}[b]{\linewidth}
			\centering
			\begin{overpic}[scale=0.5]{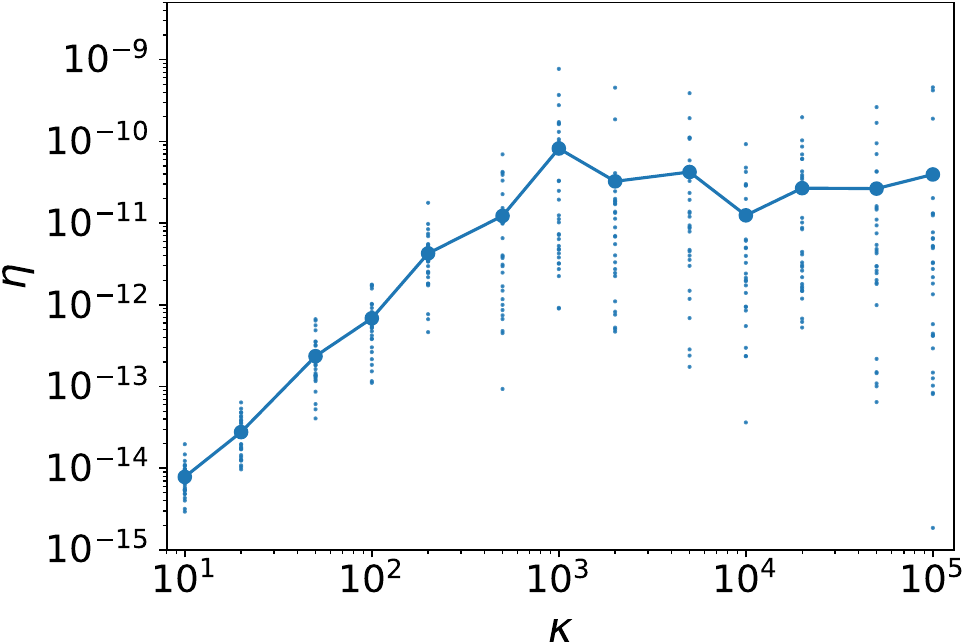}
				\put(91,62){\large\textbf{(b)}}
			\end{overpic}
		\label{res7}
		\end{minipage}
	}
	
	\caption{(a) Loss of orthogonality for different dimensions of the input vector. CGS is for classical Gram-Schmidt process and QGS is for quantum Gram-Schmidt process. $N$ is the system dimension and $\eta$ is the loss of orthogonality. Error bar is also shown. (b) Loss of orthogonality for different conditional numbers of the matrix lined by input vectors. QGS is for quantum Gram-Schmidt process. $\kappa$ is the conditional number of matrix $\cal{S}$ and $\eta$ is the loss of orthogonality. Multiple tests for one conditional number are performed. The result of each test is marked with a small blue dot. The average error is marked with big blue dots.}
\end{figure}

\section{Quantum QR decomposition algorithm}
\label{section5}
In this section, we further propose a quantum QR decomposition algorithm based on our proposed quantum Gram-Schmidt process and a subroutine called quantum inner product estimation. We first introduce a quantum inner product estimation subroutine, and then propose our quantum QR decomposition algorithm.
\subsection{Complexity of Quantum Inner Product Estimation}
\label{sub51}
 Quantum inner product estimation is a method used to calculate the overlap between any two quantum states. Given two $N$-dimension quantum states $|x\rangle$ and $|y\rangle$, a successful quantum inner product estimation algorithm gives the value $|\langle x|y\rangle|^2$ or $\langle x|y\rangle$ with probability $O(1)$. The most common algorithm is the SWAP test. However, SWAP test can only be used to calculate the overlap $|\langle x|y\rangle|^2$, and cannot calculate the value of the inner product $\langle x|y\rangle$ as a complex number. An algorithm for estimating $\langle x|y\rangle$ was proposed~\cite{zhao2021compiling} and can be a subroutine for our quantum QR decomposition algorithm. However, its complexity has not been proven. We now prove the complexity of quantum inner product estimation~\cite{zhao2021compiling} and apply it to the analysis of the complexity of our proposed quantum QR decomposition algorithm.

As shown in Fig.~\ref{fig4}, suppose $O_x$ and $O_y$ are given as oracles for state preparation. $O_x|0\rangle=|x\rangle$, $O_y|0\rangle=|y\rangle$. Then before the quantum measurements, the output of the quantum circuit as in Fig.~\ref{fig4} is
\begin{equation}
	|\phi\rangle=\frac{1}{2}(|0\rangle(|x\rangle+|y\rangle)+|1\rangle(|x\rangle-|y\rangle)).
\end{equation}
Then the probability to obtain $|0\rangle$ is given by
\begin{equation}\label{eq28}
	p=\frac{1}{2}(1+\mathrm{Re}(\langle x|y\rangle))
\end{equation}
 when the first qubit is measured in the computational basis. Thus, the real part of $\langle x|y\rangle$ is encoded in the amplitude of the quantum state $|\phi\rangle$.
\begin{figure}[b]
	\centering
	   	\includegraphics[width=\linewidth]{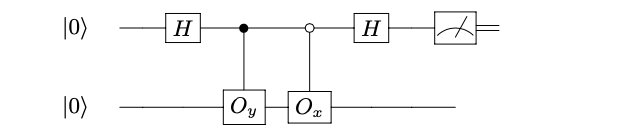}
	\caption{Quantum circuit for estimating Re$\langle x|y\rangle$, the probability that the measurement is 0 is $\left(1+\mathrm{Re}\langle x|y\rangle \right)/2$}
	\label{fig4}
\end{figure}\\

Similarly, by running the quantum circuit as in Fig.~\ref{fig5}, the imaginary part of the quantum state $\langle x|y\rangle$ is encoded into the amplitude. Then the probability to obtain $|0\rangle$  is given by
\begin{equation}\label{eq29}
	p^{\prime}=\frac{1}{2}(1+\mathrm{Im}(\langle x|y\rangle)),
\end{equation}
when  the first qubit is measured in computational basis

\begin{figure}[t]
	\centering
	   	\includegraphics[width=\linewidth]{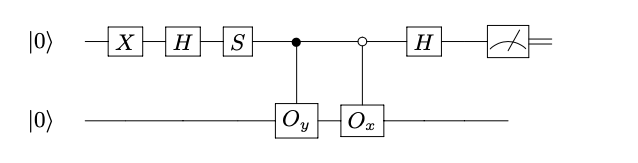}
	\caption{Quantum circuit for estimating Im$\langle x|y\rangle$, the probability that the measurement is 0 is $\left(1+\mathrm{Im}\langle x|y\rangle \right)/2$}
	\label{fig5}
\end{figure}

It is clear that Re$\langle x|y\rangle$ and Im$\langle x|y\rangle$ can be estimated via the measurement on the quantum state $|0\rangle$ for the first qubit when quantum circuits as shown in Fig.~\ref{fig4} and Fig.~\ref{fig5} are separately run for many times. That is, we record the result of each measurement for the first qubit and use the average of the results to obtain the probabilities $p$ and $p^{\prime}$ in Eq.~(\ref{eq28}) and Eq.~(\ref{eq29}), respectively. Thus,  Re$\langle x|y\rangle$ and Im$\langle x|y\rangle$ can be obtained, and the task of estimating the inner product of quantum states is completed.

\begin{lemma}[Quantum inner product estimation]
	\label{lm7}
Given $\epsilon, \, \delta>0$, for any two $N$-dimension quantum states $|x\rangle$ and $|y\rangle$, there exists a quantum algorithm that outputs $\widetilde{\langle x|y\rangle}$ as an estimation of inner product of these two quantum states, satisfying $|\widetilde{\langle x|y\rangle}-\langle x|y\rangle|\leq\epsilon$ with probability larger than $1-\delta$, using
	\begin{equation}
		O\left(\frac{1}{\epsilon^2}\log_2\left(\frac{1}{\delta}\right)\right)
	\end{equation}
	calls to quantum oracles and $\lceil \log_2 n\rceil+1$ qubits, where $n$ is the size of $|x\rangle$ and $|y\rangle$.
\end{lemma}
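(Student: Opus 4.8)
The plan is to reduce the estimation of $\langle x|y\rangle$ to two independent Bernoulli mean-estimation problems and then invoke a standard concentration bound. First I would recall what the circuits of Fig.~\ref{fig4} and Fig.~\ref{fig5} already give us: a single run of the ``real part'' circuit returns outcome $0$ on the first qubit with probability $p=\tfrac12(1+\mathrm{Re}\langle x|y\rangle)$, and a single run of the ``imaginary part'' circuit returns $0$ with probability $p'=\tfrac12(1+\mathrm{Im}\langle x|y\rangle)$. Each such run uses the controlled versions of $O_x$ and $O_y$ once, i.e.\ a constant number of oracle queries, and acts on a register of $\lceil\log_2 n\rceil$ qubits together with a single control qubit, so the qubit count $\lceil\log_2 n\rceil+1$ is immediate; what remains is to bound the number of repetitions.

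Next I would set up the estimators. Run the real-part circuit $S$ times independently, let $\widehat p$ be the empirical frequency of outcome $0$, and similarly obtain $\widehat p'$ from $S$ runs of the imaginary-part circuit; then output $\widetilde{\langle x|y\rangle}:=(2\widehat p-1)+i(2\widehat p'-1)$. By the triangle inequality $|\widetilde{\langle x|y\rangle}-\langle x|y\rangle|\le 2|\widehat p-p|+2|\widehat p'-p'|$ (or, slightly tighter, the Euclidean combination of the two errors), so it suffices to guarantee $|\widehat p-p|\le\epsilon/4$ and $|\widehat p'-p'|\le\epsilon/4$, each with probability at least $1-\delta/2$; a union bound then yields the claimed overall success probability $1-\delta$.

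The quantitative step is Hoeffding's inequality applied to the average of $S$ i.i.d.\ $\{0,1\}$ variables: $\Pr[\,|\widehat p-p|>\epsilon/4\,]\le 2\exp(-S\epsilon^2/8)$. Choosing $S=\lceil(8/\epsilon^2)\ln(4/\delta)\rceil=O(\epsilon^{-2}\log(1/\delta))$ drives each tail below $\delta/2$. The total number of oracle calls is the number of circuit runs times the $O(1)$ queries per run, namely $2S\cdot O(1)=O(\epsilon^{-2}\log(1/\delta))$, which is exactly the bound in the statement, and the qubit count is as noted above.

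I do not expect a genuine obstacle here; the only point requiring care is the dependence on the failure probability $\delta$. A crude Chebyshev argument would give only a $1/\delta$ factor, so the proof must use an exponentially concentrating estimate — Hoeffding directly, or equivalently a median-of-means amplification of a constant-confidence estimator — to obtain the $\log(1/\delta)$ scaling. One should also check that the error budget is split so that the real and imaginary contributions combine to at most $\epsilon$, but this is routine bookkeeping rather than a real difficulty.
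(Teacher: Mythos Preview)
Your proposal is correct and follows essentially the same route as the paper: run the real- and imaginary-part Hadamard-test circuits $O(\epsilon^{-2}\log(1/\delta))$ times each, apply Hoeffding's inequality to the resulting Bernoulli samples, and combine the two error bounds by a union bound (the paper uses $(1-\delta/2)^2>1-\delta$ instead, which is equivalent up to constants). The constants differ slightly because you budget $\epsilon/4$ per coordinate while the paper budgets $\epsilon/2$, but the structure and final asymptotics are identical.
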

\begin{proof}
We take the quantum circuits in Fig.~ \ref{fig4} and Fig.~\ref{fig5} for estimating $\mathrm{Re}\langle x|y\rangle$ and $\mathrm{Im}\langle x|y\rangle$, respectively. Suppose the total number of running quantum circuit for estimating $\mathrm{Re}\langle x|y\rangle$ is $N_r$, so the total number of quantum oracle calls is $2N_r$. Each measurement result $X_i=0\;\mathrm{or}\;1,i=1,2,\cdots,N_r$, then according to Hoeffding's inequality,
	\begin{equation}
		P\left(\left|\frac{1}{N_r}\sum_{i=1}^{N_r}X_i-\frac{1+\mathrm{Re}\langle x|y\rangle}{2}\right|\leq\epsilon\right)\geq1-2\exp\left(-2N_r\epsilon^2\right).
	\end{equation}
Thus, when we take $N_r=(16/\epsilon^{2})\log_2\left(4/\delta\right)$ and $\mathrm{Re}\widetilde{\langle x|y\rangle}=(2/N_r)\sum_{i=1}^{N_r}X_i-1$, we have
	\begin{equation}
		\label{eq30}
P\left(|\mathrm{Re}\widetilde{\langle x|y\rangle}-\mathrm{Re}\langle x|y\rangle|\leq\frac{\epsilon}{2}\right)\\
		\geq 1-\frac{\delta}{2}.
	\end{equation}

Similarly, we can estimate $\mathrm{Im}\langle x|y\rangle$ to error $\epsilon/2$ with probability larger than $\delta/2$ with $16\epsilon^{-2}\log_2\left(4/\delta\right)$ runs of circuit in Fig. \ref{fig5}, and obtain
	\begin{equation}
		\label{eq301}
		P\left(|\mathrm{Im}\widetilde{\langle x|y\rangle}-\mathrm{Im}\langle x|y\rangle|\leq\frac{\epsilon}{2}\right)\\
		\geq 1-\frac{\delta}{2}.
	\end{equation}
	\indent Combining Eq.~(\ref{eq30}) and Eq.~(\ref{eq301}), we have
	\begin{equation}
		P\left(|\widetilde{\langle x|y\rangle}-\langle x|y\rangle|\leq \epsilon\right)\geq \left(1-\frac{\delta}{2}\right)^2
		>\; 1-\delta.
	\end{equation}
	The total number of runing quantum circuit is $32\epsilon^{-2}\log_2\left(4/\delta\right)$,  thus the total number of quantum oracle calls is
	\begin{equation}
		64\epsilon^{-2}\log_2\left(4/\delta\right)=O\left(\frac{1}{\epsilon^2}\log_2\left(\frac{1}{\delta}\right)\right)
	\end{equation}
The quantum circuits use $\lceil \log_2 N\rceil$ qubits to encode $|x\rangle$ and $|y\rangle$. One auxiliary qubit is also needed. Thus,  the total number of qubits required is $\lceil \log_2 N\rceil + 1$. The proof of Lemma~\ref{lm7} is completed.
\end{proof}

\subsection{Algorithm Description and Complexity Analysis}
Algorithm~\ref{alg2} shows that the matrices $Q$ and $R$ of the matrix QR decomposition can be constructed through Gram-Schmidt process. The columns of matrix $Q$ are obtained by orthogonal normalization of the columns of the matrix $A$. The non-diagonal matrix elements of matrix $R$ can be obtained by calculating the inner product of the columns of matrix $A$ and matrix $Q$. Based on classical QR decomposition algorithm, we now propose a quantum QR decomposition algorithm. The basic idea of the algorithm is that the matrix $Q$ is constructed by our proposed quantum Gram-Schmidt orthogonalization process, and then the matrix $R$ is calculated with quantum inner product estimation. The proposed algorithm takes a full-rank complex matrix $A\in \mathbb{C}^{N\times M}$ as input where $N\geq M$, and outputs an orthogonal matrix $Q$ and an upper triangular matrix $R$.

Our quantum QR decomposition algorithm is described as follows. We take each column of the matrix $A$ as a $N$-dimensional vector. Thus, the matrix $A$ has $M$ vectors $\{a_1,\cdots, a_{M}\}$. The $N$ components of each  $N$-dimensional vector $a_{m}$ is $a_{nm}$ with $n=1,\cdots, N$. It is clear that $a_{nm}$ is the matrix elements of the matrix $A$. Thus, we use the same steps of the quantum Gram-Schmidt orthogonalization algorithm to obtain orthogonal normalized vectors $\{\tilde{q}_1,\tilde{q}_2,\cdots,\tilde{q}_M\}$, which form the $\tilde{Q}$ matrix with the matrix elements $\tilde{q}_{nm}$. After obtaining the matrix $\tilde{Q}$, we perform quantum inner product estimation algorithm to calculate each element $\tilde{R}_{m_1m_2}=\tilde{q}_{m_2}^{\dagger}a_{m_1}$  of the matrix $\tilde{R}$. Thus, the quantum QR decomposition algorithm is completed.

It is noted that the QR decomposition exists only for matrices with full column rank. Usually, QR decomposition of a full-rank matrix is not unique. Following the classical Gram-Schmidt based QR decomposition algorithm, one solution $A=QR$ is generated. Our quantum algorithm generates matrix $\widetilde{Q}$ and matrix $\widetilde{R}$, which are $\epsilon$ approximation to matrix $Q$ and matrix $R$ generated by classical Gram-Schmidt based QR decomposition algorithm. Our algorithm is summarized in Algorithm \ref{alg4}.
\begin{algorithm}[h]
	\caption{Quantum QR Decomposition}
	\label{alg4}
	\SetAlgoLined
		\KwIn {$A=\{a_1,a_2,\cdots,a_M\}\in \mathbb{C}^{N\times M}$ with full rank and $N\geq M$, error  $\epsilon$}
		\KwOut {$\tilde{Q}=\{\tilde{q}_1,\tilde{q}_2,\cdots,\tilde{q}_M\}\in\mathbb{C}^{N\times M},\tilde{R}\in\mathbb{C}^{M\times M}$, satisfying $\parallel \tilde{q}_m-q_m\parallel =O(\epsilon),\forall m$, $\widetilde{R}_{m_1m_2}=0,\forall m_1>m_2$, and $|\widetilde{R}_{m_1m_2}-R_{m_1m_2}|=O(\epsilon\parallel A\parallel )$, where matrix $Q$ and matrix $R$ is the exact unique decomposition solution via classical Gram-Schmidt process.}
		\BlankLine
		$|\tilde{q}_1\rangle \gets |a_1\rangle$\;
		$H\gets |\tilde{q}_1\rangle\langle \tilde{q}_1|$, $t\gets \pi$, $\epsilon_0\gets \epsilon^4$\;
		\For{$k=1$ \textbf{to} $M-1$}{
		\While{$\rm True$}{
		construct $e^{-iHt}$ to $\epsilon_0$ with qubitization\;
		run circuit with oracle $O_{a_{k+1}}$\;
		measure 1st register to get result $x$\;
		\If{$x=0$}{
		measure 2nd register and get $|q_{k+1}\rangle$\;
		$H \gets H+|\tilde{q}_{k+1}\rangle\langle \tilde{q}_{k+1}|$\;
		\textbf{break}\;
	}
\tcp{If matrix $A$ is not full rank with $a_{k+1}\in span\{a_1,a_2,\cdots,a_k\}$, then this is an endless loop.}
}
}
		$\tilde{Q}\gets\{|\tilde{q}_1\rangle,|\tilde{q}_2\rangle,\cdots,|\tilde{q}_M\}\rangle$ \;
		\For{$m_1=1$ to $M$}{
		\For{$m_2=1$ to $m_1-1$}{
			use QIPE to calculate $\langle\widetilde{q}_{m_2}|a_{m_1}\rangle$ with error rate $\epsilon$, success probability larger than $1-\epsilon/M^2$\;
		$\widetilde{R}_{m_2m_1}\gets \parallel a_{m_1}\parallel \langle\widetilde{q}_{m_2}|a_{m_1}\rangle$\;
	}
		$\widetilde{R}_{m_1m_1}\gets \parallel a_{m_1}-\sum_{m_2=1}^{m_1-1}\widetilde{R}_{m_2m_1}q_{m_2}$$\parallel$ \;
	}
	\textbf{return} $Q$,$R$\;
\end{algorithm}
\begin{theorem}[Quantum QR Decomposition]
	\label{thm3}
Consider a matrix QR decomposition problem as defined in Problem \ref{define2}. Given a full rank matrix $A\in \mathbb{C}^{N\times M}$ with $N\geq M$. Then there exists a quantum algorithm for calculating the QR decomposition of matrix $A$. Suppose the unique exact QR decomposition of matrix $A$ is $A=QR$, then the algorithm generates matrix $\tilde{Q}$ and matrix $\tilde{R}$, satisfying $\parallel\tilde{q_m}-q_m\parallel=O(\epsilon),\forall m$, $R_{m_1m_2}=0,\forall m_1>m_2$, and $|\widetilde{R}_{m_1m_2}-R_{m_1m_2}|=O(\epsilon\parallel A\parallel)$, succeeding with probability $\Omega(1)$. The query complexity of the algorithm is
	\begin{equation}
			O\left(\frac{M^2\log_2 M}{\epsilon}\left(\log_2\frac{1}{\epsilon}\right)^2\right),
	\end{equation}
	and $\lceil \log_2 M\rceil + \lceil \log_2 N\rceil + 3$ qubits are needed. The total number of additional two-qubit quantum gates is larger than the query complexity by a factor $\log_2 m$.
\end{theorem}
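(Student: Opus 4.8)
The plan is to deduce Theorem~\ref{thm3} by assembling the two ingredients already established: the quantum Gram-Schmidt procedure of Theorem~\ref{thm2} supplies the orthogonal factor $\tilde Q$, and the quantum inner-product estimation of Lemma~\ref{lm7} supplies the triangular factor $\tilde R$, exactly as orchestrated in Algorithm~\ref{alg4}. First I would feed the columns $\{a_1,\dots,a_M\}$ of $A$ into the subroutine of Theorem~\ref{thm2}. Because $A$ has full column rank, the linearly-dependent branch never fires, so the \texttt{While} loop of Algorithm~\ref{alg4} always terminates and delivers exactly $M$ states $|\tilde q_1\rangle,\dots,|\tilde q_M\rangle$ with $\tilde q_{m_1}^\dagger\tilde q_{m_2}=O(\epsilon)$ for $m_1\neq m_2$ and $\mathrm{span}\{\tilde q_m\}=\mathrm{span}\{a_m\}$, all with probability $\Omega(1)$ by Lemmas~\ref{lm6} and~\ref{lm5}. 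The one thing needed beyond the bare statement of Theorem~\ref{thm2} is the bound $\|\tilde q_m-q_m\|=O(\epsilon)$ against the \emph{specific} classical Gram-Schmidt output $q_m$, not merely against some valid orthonormal basis; this is forced since the QR factorization of a full-rank matrix is not unique. I would obtain it by re-running the error-propagation argument behind Lemmas~\ref{lm4} and~\ref{lm5}: each iteration of Algorithm~\ref{alg4} mirrors the corresponding classical step (project $a_{k+1}$ off $\mathrm{span}\{\tilde q_1,\dots,\tilde q_k\}$, then renormalize) up to the Hamiltonian-simulation error $\epsilon_0=\epsilon^4$, and the classical map $A\mapsto Q$ is Lipschitz on the full-rank locus with a constant governed by the conditioning of $A$, so the accumulated deviation over the $M$ iterations stays $O(\epsilon)$. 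I expect this to be the main obstacle, since it requires tracking proximity to a chosen solution rather than mere correctness of the output.

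Next I would build $\tilde R$. The entries $\tilde R_{m_1 m_2}$ with $m_1<m_2$ are set to $0$, so $\tilde R$ is upper triangular by construction. For $m_1>m_2$ I would invoke Lemma~\ref{lm7} on $|\tilde q_{m_2}\rangle$ and $|a_{m_1}\rangle$ with precision $\Theta(\epsilon)$ and failure probability at most $\epsilon/M^2$, obtaining $\widetilde{\langle\tilde q_{m_2}|a_{m_1}\rangle}$ within $O(\epsilon)$ of $\langle\tilde q_{m_2}|a_{m_1}\rangle$, and set $\tilde R_{m_2 m_1}=\|a_{m_1}\|\,\widetilde{\langle\tilde q_{m_2}|a_{m_1}\rangle}$. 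Comparing with $R_{m_2 m_1}=\langle q_{m_2}|a_{m_1}\rangle$, and using $\|\tilde q_{m_2}-q_{m_2}\|=O(\epsilon)$ from the previous step together with the trivial estimate $\|a_{m_1}\|=\|Ae_{m_1}\|\le\|A\|$, the triangle inequality gives $|\tilde R_{m_2 m_1}-R_{m_2 m_1}|=O(\epsilon\|A\|)$; the diagonal entries, defined as residual norms, inherit the same bound by the reverse triangle inequality. A union bound over the $\Omega(1)$-reliable Gram-Schmidt stage and the at most $\binom{M}{2}$ inner-product estimations, each failing with probability $\le\epsilon/M^2$, leaves the overall success probability $\Omega(1)$.

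Finally, for the resources I would note that the qubit count is that of the Gram-Schmidt circuit, $\lceil\log_2 M\rceil+\lceil\log_2 N\rceil+3$, since the single extra ancilla of the inner-product circuit can be reused and each $|\tilde q_{m}\rangle$ is re-prepared from its stored classical description via QRAM in $O(\log_2 N)$ time. For the query complexity I would add the contribution of Theorem~\ref{thm2}, namely $O(M^2\epsilon^{-1}(\log_2 1/\epsilon)^2)$ queries for $\tilde Q$, to the contribution of Lemma~\ref{lm7}, namely $\binom{M}{2}\cdot O(\epsilon^{-2}\log_2(M^2/\epsilon))$ queries for $\tilde R$; the delicate bookkeeping here is to verify that the inner-product stage does not exceed the advertised $O(M^2\log_2 M\cdot\epsilon^{-1}(\log_2 1/\epsilon)^2)$ budget. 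The two-qubit gate count then exceeds the query count by the factor $\log_2 M$ inherited from Lemma~\ref{lm1}, which completes the proof.
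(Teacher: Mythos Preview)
Your proposal follows the paper's proof essentially line for line: Theorem~\ref{thm2} (supplemented by the per-column bound $\|\tilde q_m-q_m\|=O(\epsilon)$, which the paper pulls from Eq.~(\ref{eqa4}) in the appendix rather than re-deriving) yields $\tilde Q$; Lemma~\ref{lm7} applied entrywise with failure probability $\epsilon/M^2$ yields the off-diagonal entries of $\tilde R$; and the errors, success probabilities, qubit counts, and query counts are combined via the same triangle/union-bound/maximum arguments you describe. The one place you are more scrupulous than the paper is your flag on the $\epsilon^{-2}$ cost of Lemma~\ref{lm7}: the paper's proof silently writes $O(\epsilon^{-1}\log_2(M^2/\epsilon))$ queries per entry where Lemma~\ref{lm7} actually gives $O(\epsilon^{-2}\log_2(M^2/\epsilon))$, so your ``delicate bookkeeping'' concern is a real discrepancy in the stated $\epsilon$-dependence rather than something that works out cleanly.
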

\begin{proof}
	We now analyze the complexity of Algorithm~\ref{alg4}, which proves the theorem above. First, Algorithm~\ref{alg4} takes Algorithm~\ref{alg3} as a subroutine. Our quantum QR decomposition algorithm outputs matrix $\tilde{Q}$ as an approximation to matrix $Q$ with Algorithm~\ref{alg3}. From theorem~\ref{thm2} we know that
	\begin{equation}
		O\left(\frac{M^2}{\epsilon}\left(\log_2\frac{1}{\epsilon}\right)^2\right)
	\end{equation}
	calls to quantum oracles, $\lceil \log_2 M\rceil + \lceil \log_2 N\rceil + 3$ qubits are needed, and from Eq.~(\ref{eqa4}) we know
	\begin{equation}
		\parallel \tilde{q}_i-q_i\parallel=O(\epsilon).
	\end{equation}
	
For each nonzero item in matrix $R$, we use quantum inner product estimation algorithm for estimating it. For quantum inner product estimation, we take the estimation accuracy to be $\epsilon$ and the success probability to be $1-\epsilon/{M^2}$. As there are at most $M^2/2$ nonzero items in matrix $R$, so the probability that we estimate each of the item to accuracy $\epsilon$ is larger than
	\begin{equation}
		\prod_{n=1}^{M^2/2}\left(1-\frac{\epsilon}{M^2}\right)>1-\frac{\epsilon}{2}.
	\end{equation}
	\indent So, the probability that each item $\widetilde{R}_{m_1m_2}$ satisfies
	\begin{align}
		&|\widetilde{R}_{m_1m_2}-R_{m_1m_2}|\nonumber\\
		<\;&|\widetilde{R}_{m_1m_2}-\langle \tilde{q}_{m_1}|a_{m_2}\rangle \parallel a_{m_2}\parallel| \nonumber\\
		\indent &+|\langle \tilde{q}_{m_1}|a_{m_2}\rangle \parallel a_{m_2}\parallel-\langle q_{m_1}|a_{m_2}\rangle \parallel a_{m_2}\parallel|\nonumber\\
		<\;&\parallel a_{m_2}\parallel\epsilon + \parallel a_{m_2}\parallel*\parallel |\tilde{q}_{m_1}\rangle-|q_{m_1}\rangle\parallel \nonumber\\
		< \;&2\epsilon \parallel a_{m_2}\parallel\nonumber\\
		=\;& O\left(\epsilon \parallel A\parallel\right)
	\end{align}
	is larger than $1-(\epsilon/2)=\Omega(1)$. For each item estimation, from Lemma \ref{lm7} we know that
	\begin{equation}
		O\left(\frac{1}{\epsilon}\log_2\left(\frac{M^2}{\epsilon}\right)\right)
	\end{equation}
	calls to quantum oracles and $\lceil \log_2 N\rceil$ +1 qubits are needed. So to estimate all the nonzero items, a total of 	
	\begin{equation}
		O\left(\frac{M^2}{\epsilon}\log_2\left(\frac{M^2}{\epsilon}\right)\right)
	\end{equation}
	calls to quantum oracles and $\lceil \log_2 N\rceil$ +1 qubits are needed.

Summarily, the probability that $\parallel\tilde{q_m}-q_m\parallel=O(\epsilon),\forall m$ is $\Omega(1)$ and the probability that $|\widetilde{R}_{m_1m_2}-R_{m_1m_2}|=O(\epsilon\parallel A\parallel), \forall m_1,m_2$ is also $\Omega(1)$. So the probability that the two results hold true at the same time is $\Omega(1)$. In this case, the whole algorithm is successful. The total qubit number needed is
	\begin{equation}
		\max\left(\lceil \log_2 M\rceil+\lceil \log_2 N\rceil+3, \lceil \log_2 N\rceil+1\right),
	\end{equation}
	i.e., $\lceil \log_2 M\rceil + \lceil \log_2 N\rceil + 3$. The total number of calls to quantum oracles is
	\begin{equation}
		\max\left(O\left(\frac{M^2}{\epsilon}\left(\log_2\frac{1}{\epsilon}\right)^2\right),O\left(\frac{M^2}{\epsilon}\log_2\left(\frac{M^2}{\epsilon}\right)\right)\right),
	\end{equation}
	i.e.,
	\begin{equation}
		O\left(\frac{M^2\log_2 M}{\epsilon}\left(\log_2\frac{1}{\epsilon}\right)^2\right).
	\end{equation}
The total number of additional two-qubit quantum gates is larger than the query complexity by a factor $\log_2 M$. Thus, the proof of theorem~\ref{thm3} is completed.
\end{proof}

\subsection{Validations of algorithm}
In this section, we further show the correctness and the performance of our proposed quantum QR decomposition algorithms for the matrix $A$ under different situations. We first calculate the error of the quantum QR decomposition algorithm which outputs an orthogonal matrix $\tilde{Q}$ and an upper triangle matrix $\tilde{R}$. The error of the quantum QR decomposition algorithm is defined as
\begin{equation}
	\eta=\parallel A-\tilde{Q}\tilde{R}\parallel.
\end{equation}
We change the dimension of the input matrix $A$ and calculate the error $\eta$. Without loss of generality, we take matrix $A$ to be a square matrix.

We randomly generate the matrix $A\in\mathbb{C}^{N\times N}$. The conditional number of matrix $A$ is fixed to be 100. We use classical Gram-Schmidt process based QR decomposition algorithm and quantum Gram-schmidt process based QR decomposition algorithm respectively to decompose matrix $A$ into matrix $\tilde{Q}$ and matrix $\tilde{R}$. Then the error $\eta=\parallel A-\tilde{Q}\tilde{R}\parallel$ is calculated. The parameter $\epsilon$ in the quantum Gram-Schmidt process is chosen to be $10^{-4}$. We change the value of $N$ and summarize the results in Fig.~\ref{res8}.

It is shown in Fig.~\ref{res8} that the error of quantum Gram-Schmidt process based QR decomposition algorithm is a little larger than that of the classical Gram-Schmidt process based QR decomposition algorithm. As discussed above in Section~\ref{4c}, it is because the whole process is simulated on a classical computer, so there is a rounding error in each step of calculations, and simulating quantum algorithms requires more computational steps. With rounding error, the error of the quantum Gram-Schmidt process based QR decomposition algorithm is less than $10^{-11}$, which indicates that there is no system error.

It is noted that we have calculated the loss of orthogonality of the matrix $\tilde{Q}$ in Section~\ref{4c}. We show that the loss of  orthogonality of the matrix $\tilde{Q}$, i.e., $\parallel \tilde{Q}^{\dagger}\tilde{Q}-I\parallel$ is less than $10^{-10}$. Meanwhile, the matrix $\tilde{R}$ generated from Algorithm \ref{alg4} is strictly an upper triangle matrix. Thus, we prove numerically that Algorithm~\ref{alg4} generates an orthogonal matrix $\tilde{Q}$ and an upper triangle matrix $\tilde{R}$. Combining the fact that the error $\eta=\parallel A-\tilde{Q}\tilde{R}\parallel$ is less than $10^{-11}$, the correctness of Algorithm~\ref{alg4} is proved.

We also study the performance of the quantum Gram-Schmidt process based QR decomposition algorithm for ill-conditional matrix $A$, of which the conditional number $\kappa$ is large. We randomly generate the matrix $A$ with different conditional numbers, fixing matrix $A$ to be an $8\times 8$ matrix. We use Algorithm~\ref{alg4} to generate matrix $\tilde{Q}$ and matrix $\tilde{R}$ for an input matrix $A$ and calculate the error $\eta$. We change the conditional number $\kappa$ and the parameter $\epsilon$ and summarize the results in Fig.~\ref{res9}. For drawing, we define $\chi=\log_{10} \eta=\log_{10} \parallel A-\tilde{Q}\tilde{R}\parallel$, $\kappa'=\log_{10} \kappa$, and $\epsilon'=\log_{10} \epsilon$.

For a fixed $\epsilon$, it is noticed that as the conditional number grows larger, the error $\eta$ first remains under $10^{-11}$, then grows rapidly and finally becomes stable at the value about $\epsilon$. This is because when the conditional number becomes larger, the linear independence of the columns of the matrix $A$ is weak. Thus,  at certain steps of the quantum Gram-Schmidt process, our algorithm may mistake the linearly independent case as the linearly dependent one. This leads to the rapid growth of the error. It is shown that this error can be reduced by using smaller $\epsilon$. From the numerical simulation, it is shown that when the conditional number $\kappa\in (0,\epsilon^{-1})$, the error is under $10^{-11}$, in this case, the quantum QR decomposition algorithm generates $\tilde{Q}$ which satisfies
\begin{equation}
	span\{\tilde{q_1}, \tilde{q_2},\cdots,\tilde{q_M}\}=span\{a_1,a_2,\cdots,a_M\}.
\end{equation}
However, when $\kappa\in (\epsilon^{-1},+\infty)$, it is shown numerically that the error $\eta$ satisfies
\begin{equation}
	\eta < \epsilon.
\end{equation}
\begin{figure}[tbph]
	\centering
	\subfigure {\
		\begin{minipage}[b]{\linewidth}
			\centering
			\begin{overpic}[scale=0.5]{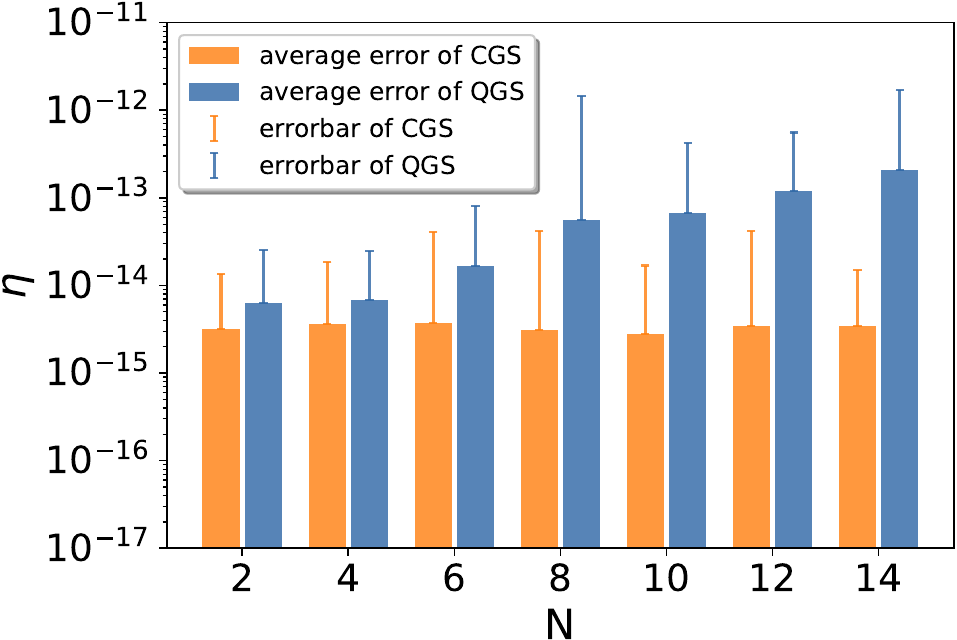}
				\put(92,60){\large\textbf{(a)}}
			\end{overpic}
		\label{res8}
		\end{minipage}
	}
	\subfigure {\
		\begin{minipage}[b]{\linewidth}
			\centering
			\begin{overpic}[scale=0.53]{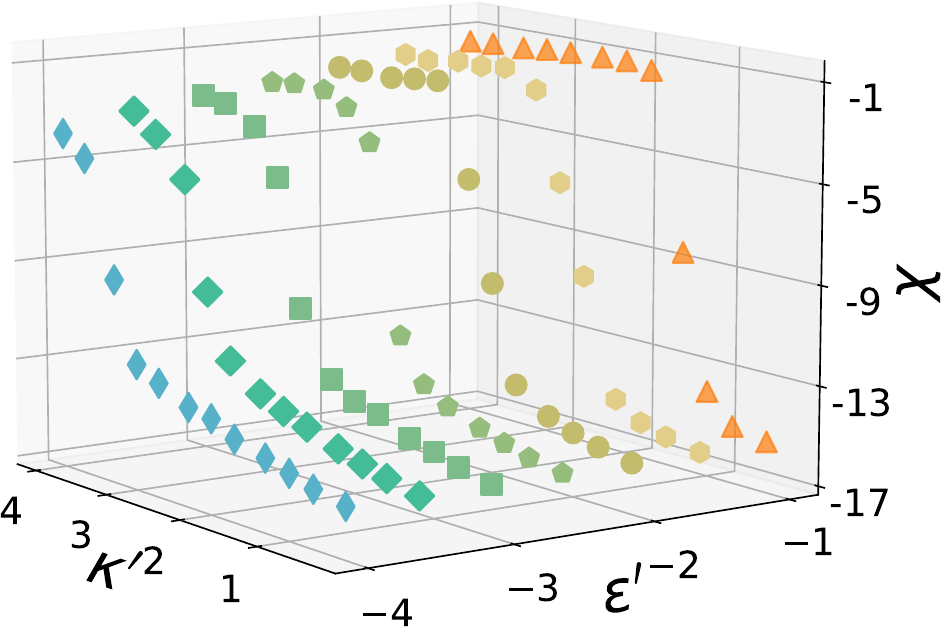}
				\put(90,64){\large\textbf{(b)}}
			\end{overpic}
		\label{res9}
		\end{minipage}
	}
	
	\caption{(a) CGS is for classcial Gram-Schmit process based QR decomposition algorithm and QGS is for quantum Gram-Schmit process based QR decomposition algorithm. $N$ is the system dimension and $\eta$ is the error of QR decomposition. The average error and error bar are calculated and shown. (b) Error of quantum QR decomposition algorithm for different conditional numbers of input matrix and different values of $\epsilon$ in the algorithm. The conditional number is denoted as $\kappa$. We here take $\chi=\log_{10} \eta=\log_{10} \parallel A-\tilde{Q}\tilde{R}\parallel$, $\kappa'=\log_{10} \kappa$, and $\epsilon'=\log_{10} \epsilon$.}
\end{figure}
Therefore, the performance of Algorithm~\ref{alg4} under different size and conditional number of the input matrix is examined. The numerical simulations show the correctness and robustness of Algorithm~\ref{alg4}.
\section{Applications}\label{section6}

The quantum Gram-Schmidt algorithm can be independently applied to solve vector orthogonal normalization problem.  It can also be a subroutine for quantum QR decomposition algorithm.  In this section, we mainly show the applications of the proposed quantum QR  decomposition algorithm by several examples, e.g., linear least squares regression, solving linear equations, and eigenvalues.

\subsection{Linear Least Squares Regression}
The least squares problem is a kind of regression problem, and the general form of least squares problem is
\begin{equation}
	\mathop{\min}_{x}| f\left(x\right)|^2,
\end{equation}
where $f\left(x\right)$ is the residual function and represents the difference between the predicted and measured value, and the loss function is $f\left(x\right)^2$. When $f\left(x\right)=Ax-b$, the least squares problem is a linear least squares problem. The accurate solution can be derived, by letting
\begin{align}
	\frac{\partial \parallel Ax-b\parallel^2}{\partial x}=A^{T}Ax-A^{T}b=0.
\end{align}
Thus $x=\left(A^TA\right)^{-1}A^Tb$. But to obtain $(A^TA)^{-1}$ is hard. However, when matrix $A$ is a full column-rank matrix,  we can use QR decomposition $A=QR$ to simplify the solution, i.e.,
\begin{align}
	x&=\left(A^TA\right)^{-1}A^Tb=\left(R^TQ^TQR\right)^{-1}R^TQ^Tb\nonumber\\
	&=R^{-1}Q^Tb.
\end{align}

As matrix $R$ is an upper triangle matrix, the inversion of matrix $R$ can be obtained classically within $O(M^2)$, where $M$ is the column number of the matrix $A$.  Thus, the whole process is much faster than calculating the inversion of $A^TA$.  Therefore, we can get an algorithm for linear least squares regression using quantum QR decomposition algorithm as a subroutine. We decompose matrix $A$ with quantum QR decomposition algorithm proposed in Algorithm~\ref{alg4}, then solve the linear equations $Rx=Q^Tb$ classically to get $x$.
\begin{figure*}[th]
	\centering
	\subfigure {\
		\begin{minipage}[h]{0.31\linewidth}
			\centering
			\begin{overpic}[scale=0.31]{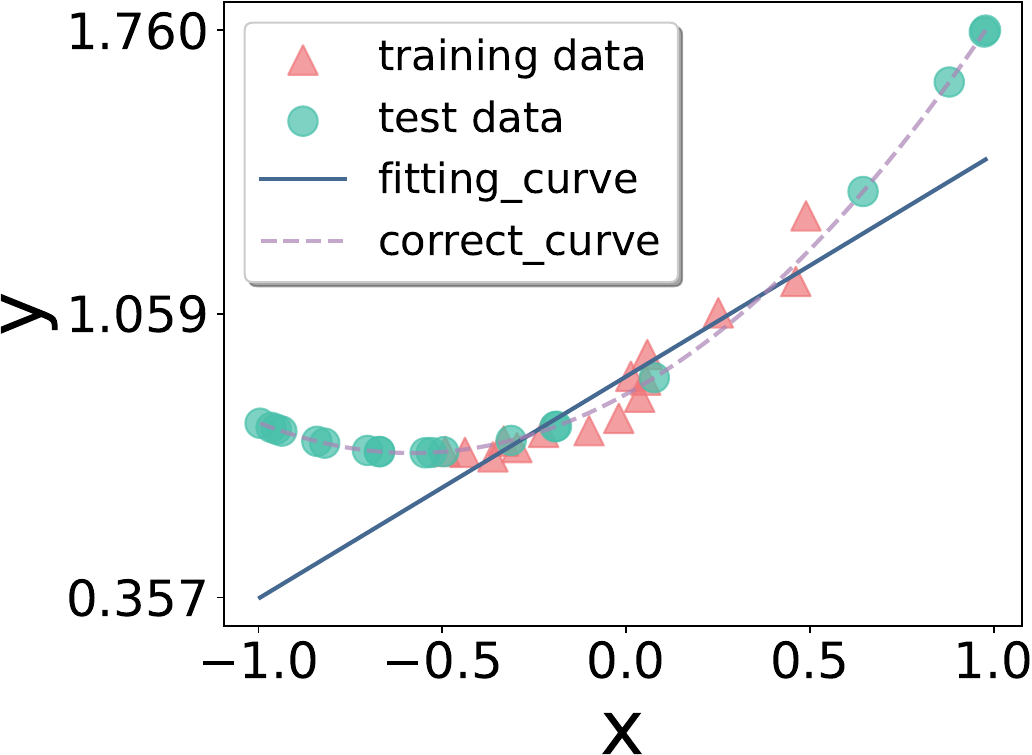}
				\put(87.5,15){\large\textbf{(a)}}
			\end{overpic}
		\end{minipage}
	}
	\subfigure {\
		\begin{minipage}[h]{0.31\linewidth}
			\centering
			\begin{overpic}[scale=0.31]{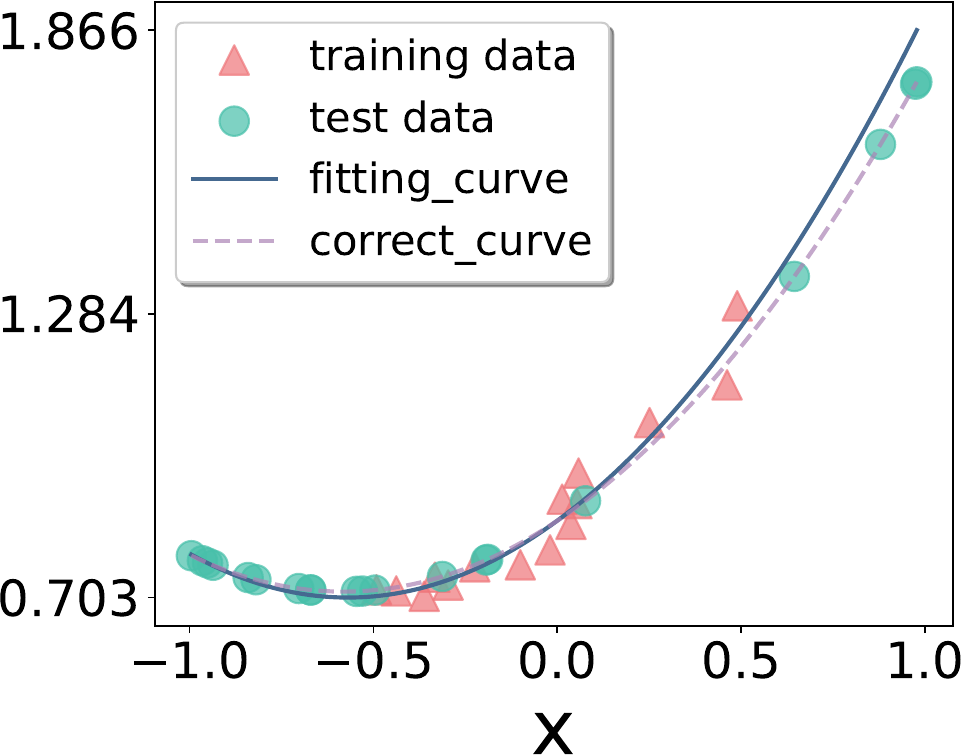}
				\put(86,16){\large\textbf{(b)}}
			\end{overpic}
		\end{minipage}
	}
\subfigure {\
	\begin{minipage}[h]{0.31\linewidth}
		\centering
		\begin{overpic}[scale=0.31]{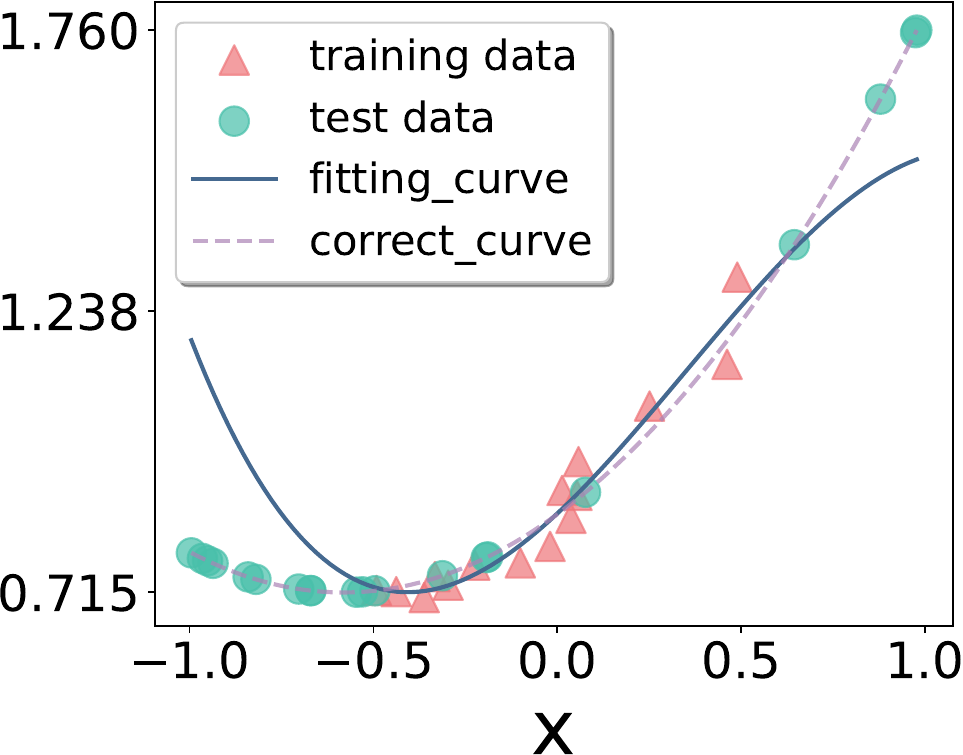}
			\put(87,16){\large\textbf{(c)}}
		\end{overpic}
	\end{minipage}
}
	
	\caption{Fitting $10$ generated data points (the red triangles) with quantum QR decomposition algorithm based least squares fitting method. The order of the polynomial function to be fitted (the purple dashed curve) is chosen to be 2. The order of the fitting polynomial function (the deep blue line) is chosen to be $k$. And test data points (the green dots) are shown. The correct polynomial function is $g(x)=0.43x^2+0.5x+0.86$. (a) $k=1$, under-fitting case. The fitting function is $f(x)=0.54x+0.90$, and the relative error of test data points is 0.249. (b) $k=2$, appropriate fitting case. The fitting function is $f(x)=0.48x^2+0.55x+0.86$, and the relative error of test data points is 0.042.(c) $k=3$, over-fitting case. The fitting function is $f(x)=-0.47x^3+0.50x^2+0.54x+0.86$, and the relative error of test data points is 0.193..}
	\label{res1}
\end{figure*}

Let us now verify our algorithm using an example of polynomial fitting. We assume that a set of data $\{(x_1,y_1), (x_2,y_2), \cdots, (x_n,y_n)\}$ is given. The data set is generated by a polynomial function
\begin{equation}
\label{fit_gen}
  y_i=g(x_i)=\sum_{l=0}^{r} a_l x_i^l.
\end{equation}

Based on this data set, we find a polynomial $f(x)$ to fit the data by least square fitting using quantum QR decomposition as a subroutine so that the mean square error is minimized. We assume
\begin{equation}
\label{fit}
  f(x_i)=\sum_{l=0}^{k} m_l x_i^l.
\end{equation}
Then the coefficients $m_{l}$  can be obtained with least square fitting method with
\begin{align}
	\label{fit_2}
  \overrightarrow{m}&=\min_{\overrightarrow{m}}\sum_{n=1}^{N}\left(\sum_{l=0}^{k} m_l x_i^l-y_i\right)^2\nonumber\\
  &=\min_{\overrightarrow{m}}\parallel X \overrightarrow{m} - Y\parallel ^2,
\end{align}
where
\begin{equation}
\label{lsf}
  X=\begin{pmatrix}
      1 & x_1 & x_1^2 & \cdots & x_1^k \\
      1 & x_2 & x_2^2 & \cdots & x_2^k \\
      1 & x_3 & x_3^2 & \cdots & x_3^k \\
      \vdots & \vdots & \vdots & \ddots & \vdots \\
      1 & x_n & x_n^2 & \cdots & x_n^k \\
    \end{pmatrix}
\end{equation}
and
\begin{equation}
  Y=\begin{pmatrix}
  y_1 & y_2 & y_3 & \cdots & y_n \\
\end{pmatrix}^T
\end{equation}

The items in the polynomial base set $\{1,x,x^2,\cdots,x^k\}$ are orthogonal to each other, the matrix $X$ as in Eq.~(\ref{lsf}) is full rank and has a QR decomposition. Thus, our quantum QR decomposition algorithm can be applied to least square fitting. It is noted that to fit polynomial function $g(x)$ as in Eq.~(\ref{fit_gen}), it is important that the order of polynomial function $k$ as in Eq.~(\ref{fit}) is chosen appropriately. For fixed $r$ as in Eq.~(\ref{fit_gen}), the fitting function $f(x)$  in Eq.~(\ref{fit}) is under fitting when $k<r$,  appropriate when $k=r$, and over fitting when $k>r$.

In Fig.~\ref{res1}, we show the performance of our algorithm. The training dataset is generated with a randomly chosen polynomial function $g(x)=0.86x^2+0.50x+0.43$. We generate ten data points $\{(x_1,y_1),(x_2,y_2),\cdots,(x_{10},y_{10})\}$, which satisfy
\begin{equation}
	y_i=g(x_i)=0.43x_i^2+0.5x_i+0.86,\forall i=1,2,\cdots, 10
\end{equation}
as in Eq.~(\ref{fit_gen}). Then we use a linear function, a quadratic function, and a cubic function to fit the generated training data points respectively, i.e., the value of $k$ as in Eq.~(\ref{fit}) is chosen as $1$, $2$, and $3$, respectively. Quantum QR decomposition algorithm is used to calculate the coefficients in Eq.~(\ref{fit_2}). We plot the generated data points, the correct polynomial function, and the fitted polynomial function in Fig.~\ref{res1} for $k=1,\,2,\,3$, respectively. We also generate fifteen test data points, which satisfy the correct polynomial function $g(x)$ in Eq.~(\ref{fit_gen}). We use test data points to observe if the fitted polynomial function is under-fitting, appropriate, or over-fitting.

The results show in Fig.~\ref{res1} that the linear function cannot fit both the training data and the test data and is the under-fitting case. The quadratic function fits both the training data and the test data well, which is the appropriate case. The cubic function fits well on the training data but cannot fit the test data and is the over-fitting case. The results above agree with the conclusion of classical machine learning. Therefore, the correctness of our quantum QR decomposition algorithm is verified.

Moreover, we use quantum QR decomposition algorithm based least squares fitting method to observe under-fitting, appropriate fitting, and over-fitting phenomena for more general cases. For fixed $r$ as in Eq.~(\ref{fit_gen}), we randomly choose a $r$-order polynomial function $g(x)$ and generate $10$ data points according to $g(x)$. Different orders $k$ as in Eq.~(\ref{fit}) of polynomial functions are chosen for fitting the given data. We calculate the fitting polynomial $f(x)$ using quantum QR decomposition algorithm. We further generate $100$ test data and calculate the fitting error on test data. Then we change $r$ and repeat the process. The relative test error $\epsilon_r$ is recorded and shown in Fig.~\ref{res2}.
	
From classical machine learning theory, we know that the lower left region of the figure is the under-fitting region, satisfying $k<r$. The upper right region of the figure is the over-fitting region, satisfying $k>r$. The test error is large in both of these areas. The polynomial order used for fitting is selected appropriately on the diagonals of the figure satisfying $k=r$, and the test error is minimized. Our quantum QR decomposition algorithm based least squares fitting method reproduces this result as in Fig.~\ref{res2}, which indicates the correctness of our algorithm.

\subsection{Solving Linear Equations}
Solving linear equations is a fundamental problem in linear algebra. Given a matrix $A\in\mathbb{C}^{N\times M}$ and a column vector $b\in\mathbb{C}^N$, define a system of linear equations
\begin{equation}
	Ax=b.
\end{equation}
For the general case, we need to judge whether the system has a unique solution, infinite solutions, or no solution, and give the solution of the system when the system has a unique solution. In general, the matrix $A$ is not necessarily a square matrix. Based on matrix analysis and linear algebra, we know that the system of linear equations has solutions when the given right-side vector $b$ is in the linear space spanned by the columns of matrix $A$. In this case, the system of linear equations has a unique solution when matrix $A$ is invertible, and it has infinite solutions when matrix $A$ is singular. Conversely, if the given vector $b$ is not in the column space of matrix $A$, the system of linear equations has no solution. Based on quantum Gram-Schmidt orthogonalization and quantum QR decomposition, we propose a quantum algorithm to solve linear equations, judge whether the system of linear equations has a unique, infinite, or no solution, and give the solution when the system of linear equations has a unique solution.

\begin{figure}[h]
	\centering
	% Requires \usepackage{graphicx}
	\includegraphics[width=.75\linewidth]{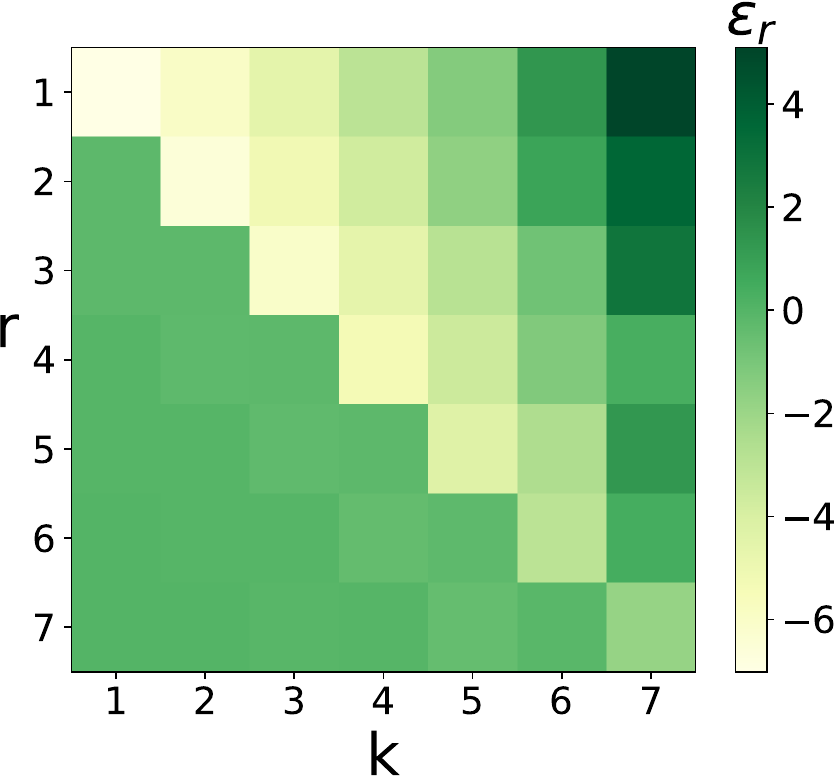}\\
	\caption{The logarithmic relative error $\epsilon_r$of test data is calculated and plotted. $k$ is the order of fitting polynomial function $f(x)$ and $r$ is the order of fitted polynomial function $g(x)$. Underfitting regime is in the lower triangular area. Overfitting regime is in the upper triangular area. Appropriate fitting regime is on the diagonal.  } \label{res2}
\end{figure}

Let us assume that the vector $b$ is in the column space of matrix $A$. Following the quantum circuit used in quantum Gram-Schmidt orthogonalization as in Fig.~\ref{fig3}, we take the state preparation oracle
\begin{equation}
	O_b|0\rangle=|b\rangle,
\end{equation}
where $|b\rangle$ is the amplitude encoding of the vector $b/\parallel b\parallel $  as shown in Eq.~(\ref{Eq13}). We take the Hamiltonian in the circuit as
\begin{equation}
	H=\sum_{m=1}^{M}|a_m\rangle\langle a_m|,
\end{equation}
via the qubitization. Where $|a_m\rangle$ is the amplitude encoding of the vector $a_m/\parallel a_m\parallel $  as shown in Eq.~(\ref{Eq13}), and $a_{m}$ is a vector formed by all matrix elements $a_{nm}$ of the $m$th column of the matrix $A$. When $b\in span\{a_1,a_2,\cdots,a_{M}\}$, i.e
\begin{equation}
	|b\rangle = \sum_{m=1}^{M}\langle a_m|b\rangle |a_m\rangle,
\end{equation}
we have
\begin{align}
	H|b\rangle&=\sum_{m_1=1}^{M}\sum_{m_2=1}^{M}|a_{m_1}\rangle\langle a_{m_2}|a_{m_1}\rangle \langle a_{m_2}|b\rangle\nonumber\\
	&=\sum_{m_1=1}^{M}\sum_{m_2=1}^{M}\delta_{m_1m_2}|a_{m_1}\rangle \langle a_{m_2}|b\rangle=|b\rangle.
\end{align}
Thus, the output of the circuit for the quantum Gram-Schmidt orthogonalization algorithm is
\begin{equation}
	|1\rangle |b\rangle,
\end{equation}
and the probability that we measure $1$ in the first register is $1$. Conversely, if 0 is measured in the first register, then $b\notin span\{a_1,a_2,\cdots,a_{M}\}$, the system of linear equations has no solution. From Eq.~(\ref{eqa3}), we know that if we run the circuit for $(1/\epsilon)\ln\left(1/\epsilon\right)$ times and all the measurement results is $1$, then we can infer that $b\in span\{a_1,a_2,\cdots,a_{M}\}$, as
\begin{equation}
	P\left(1-\sum_{n=1}^{m} |\langle a_i|b\rangle|^2<\epsilon\right)>1-\epsilon.
\end{equation}
\begin{figure*}[th]
	\centering
	\subfigure {\
		\begin{minipage}[h]{0.31\linewidth}
			\centering
			\begin{overpic}[width=5.8cm,height=9cm]{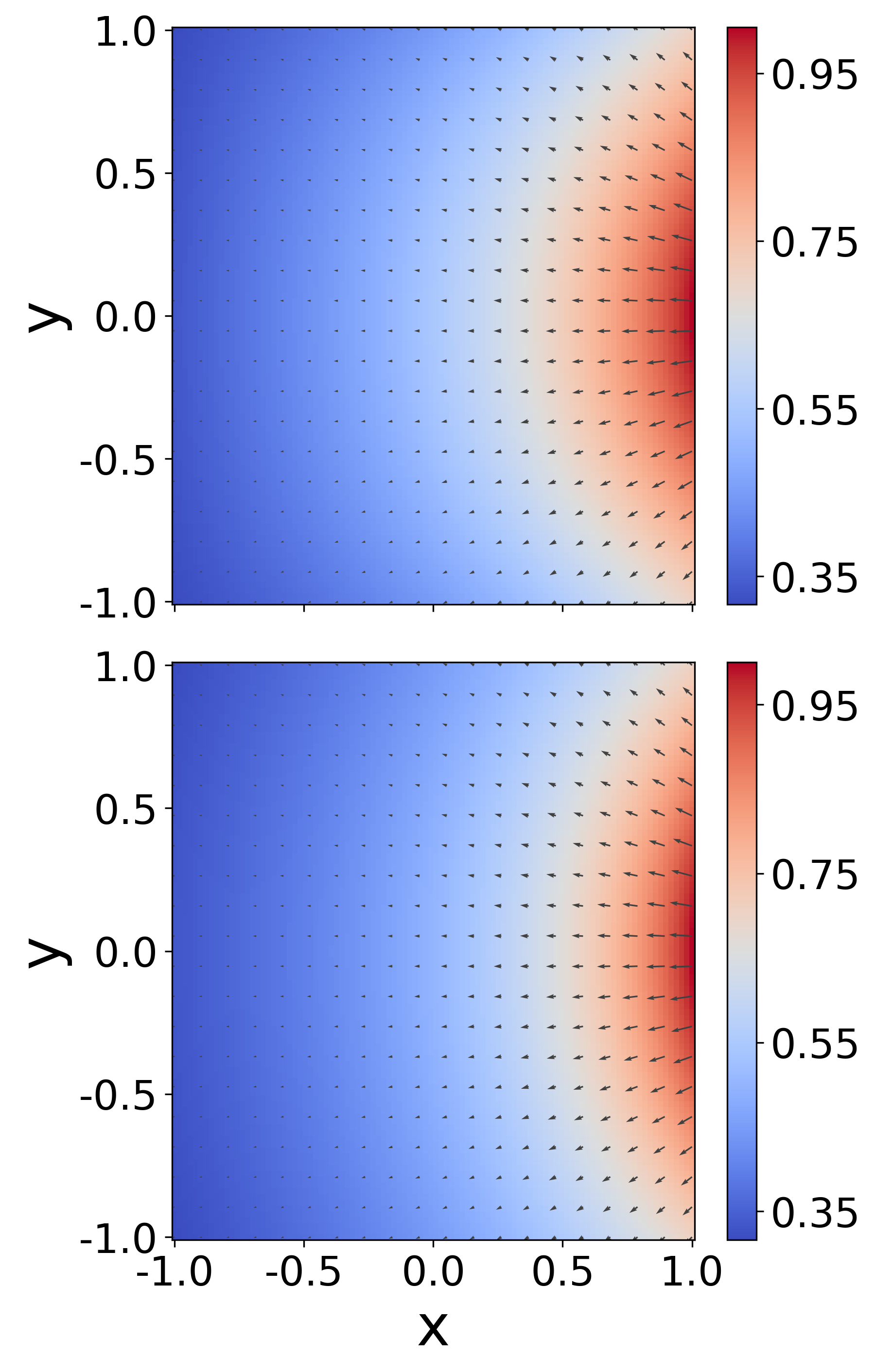}
				\put(30,102){\large\textbf{(a)}}
			\end{overpic}
		\end{minipage}
	}
	\subfigure {\
		\begin{minipage}[h]{0.31\linewidth}
			\centering
			\begin{overpic}[width=5.8cm,height=9cm]{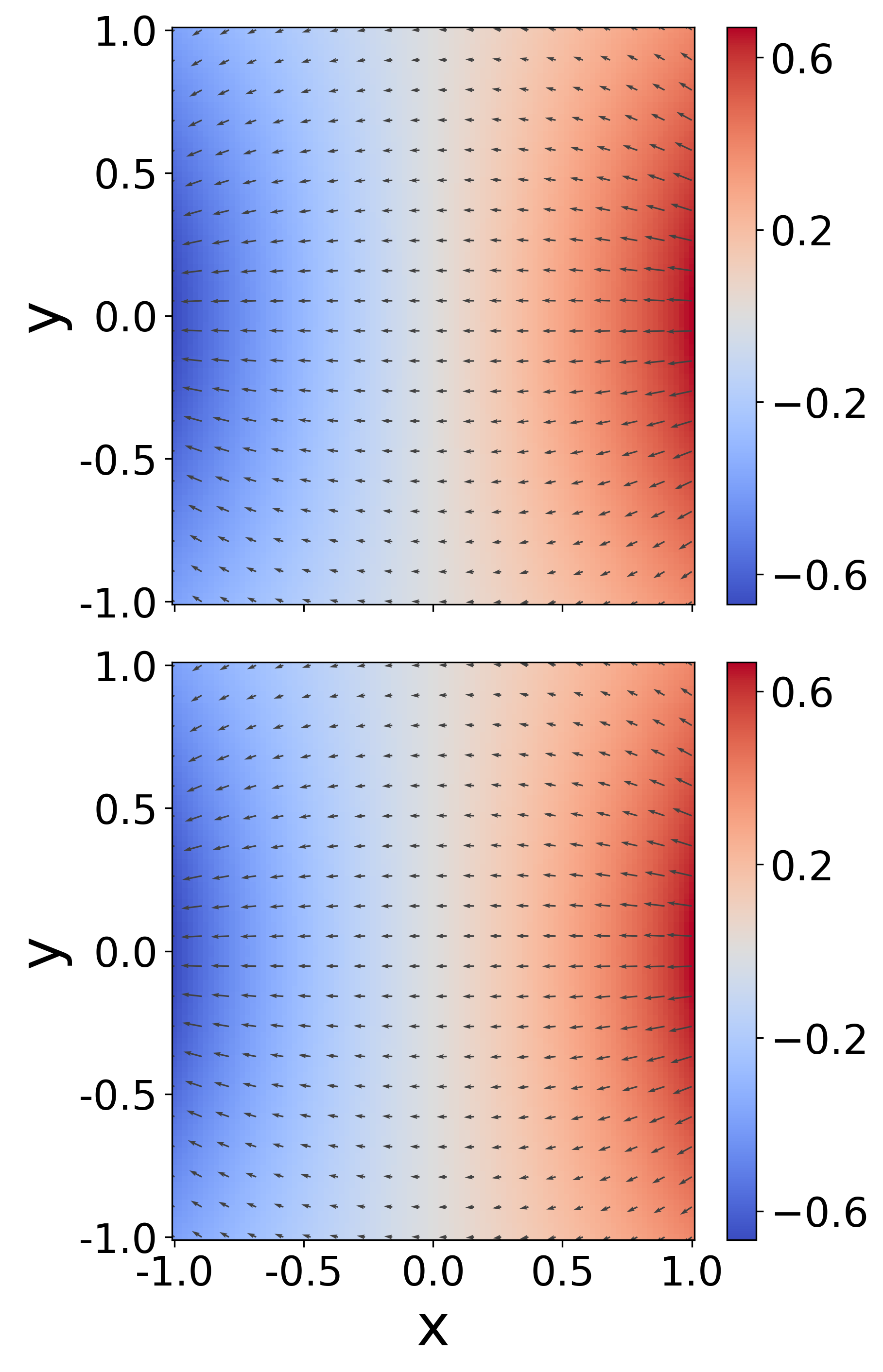}
				\put(30,102){\large\textbf{(b)}}
			\end{overpic}
		\end{minipage}
	}
	\subfigure {\
		\begin{minipage}[h]{0.31\linewidth}
			\centering
			\begin{overpic}[width=5.8cm,height=9cm]{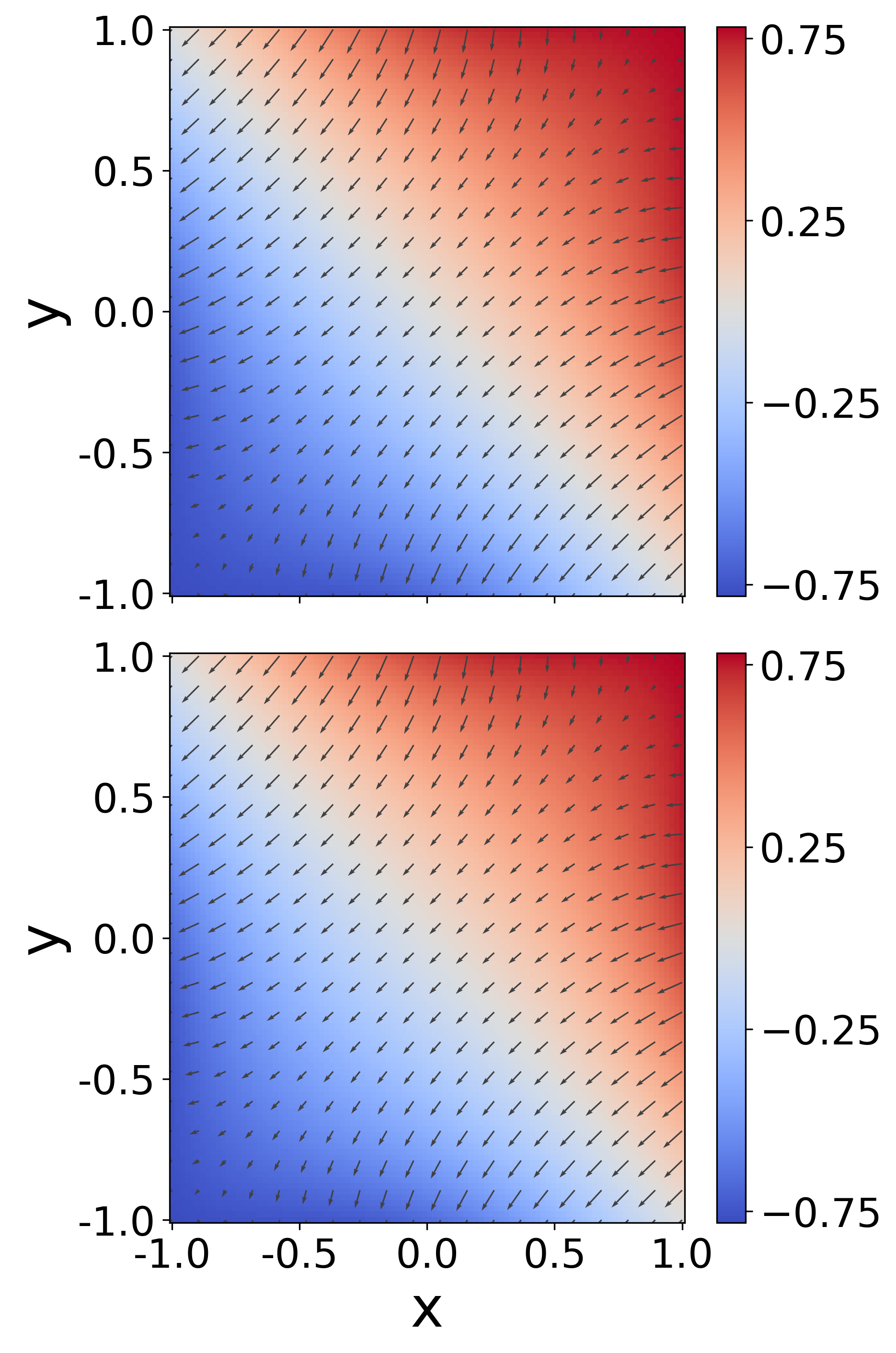}
				\put(30,102){\large\textbf{(c)}}
			\end{overpic}
		\end{minipage}
	}
	
	\caption{The electric potential is plotted as a two-dimensional heatmap. The black arrows represent the electric field intensity, where the directions of arrows represent the directions of electric field intensity, and the length of arrows represent the magnitude of electric field intensity. The top three figures are the solutions from quantum QR decomposition based numerical simulations. The bottom three figures are the exact solutions, which agree with numerical solutions. (a) Electric monopole case, the relative error is 0.053. (b) Electric dipole case, the relative error is 0.057. (c) Electric quadrupole case, the relative error is 0.057.}
	\label{res3}
\end{figure*}

By using the quantum Gram-Schmidt orthogonalization algorithm, we can know if $b$ is or is not in the column space of matrix $A$. If  $b$ is not in the column space of matrix $A$, then linear equations have no solution. If $b$ is in the column space of matrix $A$,  then we perform quantum QR decomposition on the matrix $A$. When the matrix $A$ is not full rank, then linear equations have infinite solutions. In this case, quantum QR decomposition algorithm will go into an endless loop. Conversely, if matrix $A$ is a full-rank matrix, then linear equations have a unique solution. We can obtain the QR decomposition of the matrix $A$; then, the unique solution can be calculated classically with time complexity $O(MN)$ for
\begin{equation}
	x=R^{-1}Q^T b.
\end{equation}

We now further show the correctness of the quantum QR decomposition algorithm for solving the linear equations by an example of solving Laplace equation
\begin{equation}\label{eq63}
  \nabla^2 u=0.
\end{equation}
 At present, the most commonly used method to solve differential equations is the finite difference method. Using methods like the Runge-Kutta method, a differential equation is transformed into difference form, and then transformed into linear equations, which can be solved with quantum QR decomposition algorithm. We verify the performance of our algorithm on a two-dimensional Laplace equation with Dirichlet boundary condition.

 We calculate the electric potential in vacuum. The electric potential $\phi$ satisfies Maxwell's equations
\begin{equation}\label{eq:69}
  -\nabla^2 \phi(x,y)=\frac{\rho(x,y)}{\epsilon_0}.
\end{equation}
We calculate $\phi(x,y)$ on area $\Omega=\{(x,y)|-1<x,y<1\}$, where there is no charge.  That is,  Eq.~(\ref{eq:69})  is changed to a  Laplace equation
\begin{equation}
\nabla^2 \phi(x,y)=0
\end{equation}
for $\forall (x,y)\in \Omega$.  We assume that the electric potential in this area is caused by electric monopole, electric dipole, and electric quadrupole, respectively. In the electric monopole case, we calculate the case that there is a positively charged particle at $(2,0)$. Thus, the boundary condition  and exact solution for equation $\nabla^2 \phi(x,y)=0$ is
\begin{equation}
	\phi(x,y)=\frac{kq}{\sqrt{(x-2)^2+y^2}},
\end{equation}
where $k$ is the Coulomb constant and $q$ is the quantity of electric charge. In the electric dipole case, we calculate the case that there is a positively charged particle at $(2,0)$ and a negatively charged particle at $(-2,0)$. Thus,  the boundary condition and exact solution is
\begin{equation}
	\phi(x,y)=\frac{kq}{\sqrt{(x-2)^2+y^2}}-\frac{kq}{\sqrt{(x+2)^2+y^2}}.
\end{equation}
In the electric quadrupole case, we calculate the case that there are positively charged particles at $(2,0)$ and $(0,2)$, and there are negatively charged particles at $(-2,0)$ and $(0,-2)$. Thus, the boundary condition and exact solution is
\begin{align}
	\phi(x,y)=&\frac{kq}{\sqrt{(x-2)^2+y^2}}-\frac{kq}{\sqrt{(x+2)^2+y^2}}\nonumber\\ &+\frac{kq}{\sqrt{x^2+(y-2)^2}}-\frac{kq}{\sqrt{x^2+(y+2)^2}}.
\end{align}

Without loss of generality, we take $k=q=1$. We use quantum QR decomposition algorithm to solve the electric potential for the monopole case, dipole case, and quadrupole case respectively, and compare with the exact solution. The result is shown in Fig.~\ref{res3}. It is shown that the numerical results agree well with the exact solution, which shows the correctness of our quantum algorithm and its feasibility in solving differential equations.

\subsection{ Solving Eigenvalues}
Our algorithm can also be applied to solve eigenvalues of a full rank matrix by replacing classical algorithm with quantum one.  Classically, the eigenvalues of a full rank matrix  can be solved by QR iteration algorithm as follows. Suppose we have a full rank matrix $A$ and we want to compute its eigenvalues. We first take $A_1=A$, then start to iterate the  matrix. At the $k$-step (starting with $k=1$), we  compute the QR decomposition such that $A_k=Q_kR_k$ with an orthogonal matrix $Q_k$ and an upper triangular matrix $R_k$,  and obtain  $A_{k+1}$ as
\begin{align}\label{eq67}
	A_{k+1}=R_kQ_k.
\end{align}
When $A_k$ converges to the upper triangular matrix, the iteration is completed, and the principal diagonal elements are the eigenvalues of matrix $A$. The classical QR iteration algorithm is almost the most efficient method to solve all eigenvalues of matrices up to now. However, we find that this classical iteration QR algorithm can be replaced by the quantum one, in which classical QR decomposition in each iteration is replaced by the quantum one. We have proved that the computational complexity of the quantum QR decomposition algorithm is less than the classical one. Thus, the computational complexity of the quantum QR iteration algorithm is also less than the classical one.

To more concretely show the application of our algorithm to the problem, we now take examples of solving energy spectra of quantum Ising and Heisenberg models, which have extensively been studied in quantum many-body physics and condensed matter physics~\cite{schollwock2005density,schollwock2011density,orus2008infinite,orus2014practical,fehske2007computational}. We will show that our quantum algorithm can be applied to solve these problems, which are not easy to be solved in the large scale systems by classical numerical one. We know that $N$-site open boundary one dimensional Ising model has the Hamiltonian
\begin{equation}
  H=-h\sum_{n=1}^{N}\sigma_x^i-J\sum_{n=1}^{N-1}\sigma_z^i\sigma_z^{i+1}.
\end{equation}
However, $N$-site open boundary one dimensional Heisenberg model has the Hamiltonian
\begin{equation}
  H=-J\sum_{n=1}^{N-1}\left(\sigma_x^i\sigma_x^{i+1}+\sigma_y^i\sigma_y^{i+1}+\sigma_z^i\sigma_z^{i+1}\right).
\end{equation}
We calculate the energy spectra of one dimensional Ising model and Heisenberg model taking the size to be $5$ and all parameters to be $1$. Using quantum QR decomposition algorithm, we perform iterative QR algorithm to obtain the eigenvalues of one dimensional Ising model and Heisenberg model.

\begin{figure}[tbph]
	\centering
	\subfigure {\
		\begin{minipage}[b]{\linewidth}
			\centering
			\begin{overpic}[scale=0.55]{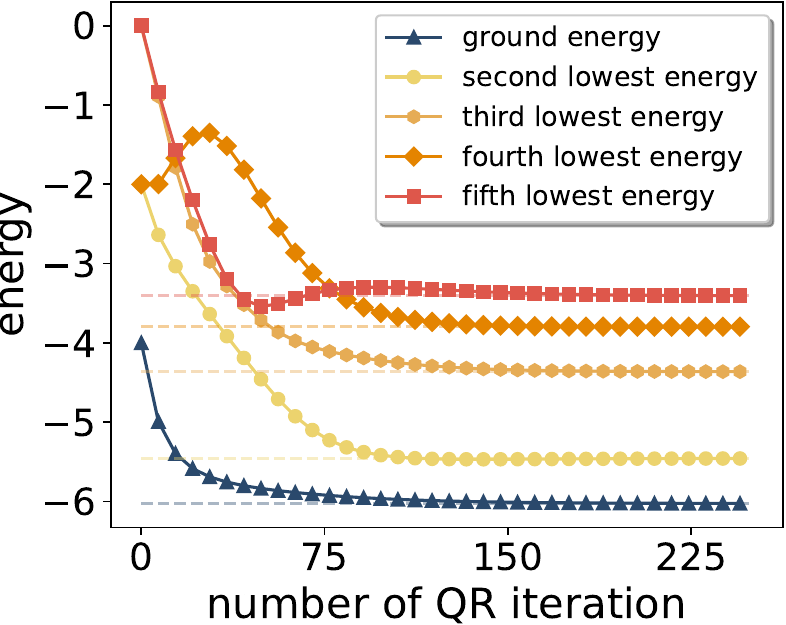}
				\put(-5,75){\large\textbf{(a)}}
			\end{overpic}
		\label{res4}
		\end{minipage}
	}
	\subfigure {\
		\begin{minipage}[b]{\linewidth}
			\centering
			\begin{overpic}[scale=0.55]{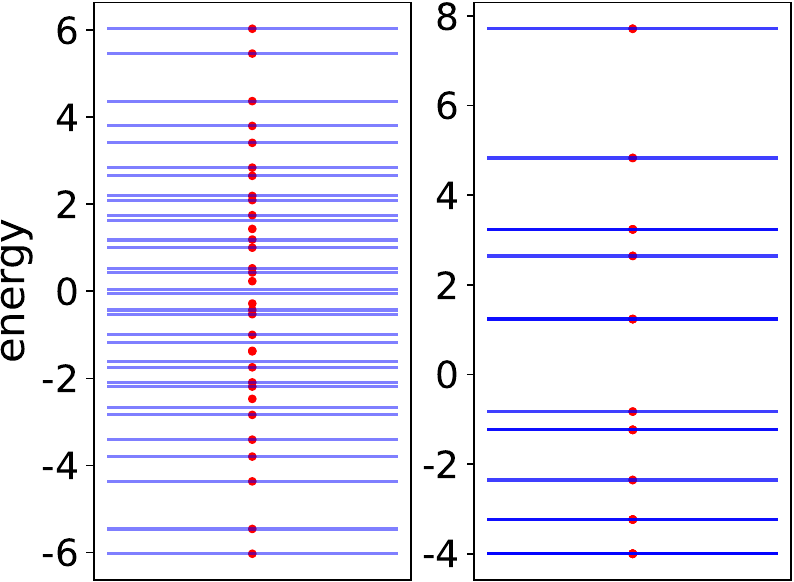}
				\put(-5,70){\large\textbf{(b)}}
			\end{overpic}
		\label{res}
		\end{minipage}
	}
	
	\caption{(a) Convergence of few lowest eigenvalues of Ising model calculated by using quantum QR decomposition based QR algorithm. (b) Comparison of energy spectra of Ising model and Heisenberg model calculated by using quantum QR decomposition based QR algorithm with exact solutions of energy spectra.}
\end{figure}

We first study the convergence of calculated eigenvalues of Ising model. We take the diagonal elements of the matrix $A_k$ obtained by the iterative QR algorithm at each step, take the lowest values of the diagonal elements as the calculated values of the lowest energy levels, and study how these values change with the increase of the number of iterative steps. The convergence of a few lowest energy levels through the iteration of QR process is shown in Fig.~\ref{res4}, where the dashed lines represent the exact values of Ising model energy spectra calculated by exact diagonalization method. It is shown that the calculated values of the few lowest energy levels converge to the exact values accurately. For an Ising model with the size $N$, the dimension of the Hamiltonian for Ising model is $2^N$. It is shown in Fig.~\ref{res4} that QR algorithm converges fast with $O(2^N)$ iterations.

Then we compare the calculated energy spectra of Ising model and Heisenberg model with exact energy spectra, which are calculated by exact diagonalization method. The results are shown in Fig.~\ref{res}. The blue lines are the exact energy spectra of Ising model (left) and Heisenberg model (right), and the red dots are the energy spectra calculated by using quantum QR decomposition based QR algorithm. It is shown that the red dots agree with the blue lines for both Ising model and Heisenberg model, showing that our quantum QR decomposition based QR algorithm can successfully calculate the energy spectra of both these two models accurately. Thus, our algorithm can be applied to eigenvalue finding with high accuracy.

\section{Discussions and Conclusions}
\label{section7}

In summary, we propose quantum algorithms for vector set orthogonal normalization and matrix QR decomposition respectively, which are basic tasks in matrix analysis and linear algebra with various applications in many fields.  The proposed algorithms achieve polynomial speedup over the best classical algorithms and quantum algorithms, scaling $O(N^2)$ in the $N$-dimensional system. Our algorithms scaling $O(1/\epsilon^2)$ in the tolerant error $\epsilon$  may be further improved. We use repetitive measurements to estimate the inner product of two quantum states, leading to this $O(1/\epsilon^2)$  scaling.  Quantum amplitude estimation methods can be applied to estimate quantum inner product~\cite{brassard2002quantum, giurgica2022low, grinko2021iterative}, which may improve the scaling in tolerant error $\epsilon$ to $O(1/\epsilon)$.

Despite of these shortcomings, the scaling $O(N^2)$ for the $N$-dimensional system is very promising. Meanwhile, we study several applications of our quantum algorithms to linear least squares regression, solving linear equations and eigenvalues of the matrix. The correctness and efficiency of our algorithms are also proved with detailed proof and numerical simulations. We believe that our algorithms have broad applications for serving as subroutines to solve many important problems. Our quantum algorithms for some applications can be demonstrated by NISQ-era superconducting quantum computing chips.

\begin{acknowledgments}
	This work was supported by Innovation Program for Quantum Science and Technology (Grant No. 2021ZD0300201).% put your acknowledgments here.
\end{acknowledgments}
\appendix
\section{Derivation of Quantum Phase Estimation}
\label{ap1}
\indent The quantum circuit for quantum phase estimation is given in Fig. \ref{fig6} when we limit the number of qubits in the first register to be 1. Suppose the Hamiltonian $H$ has a spectral decomposition
\begin{equation}
	H=\sum_{n=1}^{N}\lambda_n|u_n\rangle\langle u_n|,
\end{equation}
where $N$ is the dimension of $H$ and each $\lambda_n$ can be 0. Then $\{|u_1\rangle,|u_2\rangle,\cdots,|u_N\rangle\}$ is a complete set, so the input state of the second register $|u\rangle_s$ can be written as
\begin{equation}
	|u\rangle_s = \sum_{n=1}^{N} \langle u_n|u\rangle_s |u_n\rangle
\end{equation}
So the input state of the circuit is
\begin{equation}
	|0\rangle_f|u\rangle_s = \sum_{n=1}^{N} |0\rangle_f\langle u_n|u\rangle_s  |u_n\rangle.
\end{equation}
\begin{figure}[b]
	\centering
	\includegraphics[width=.95\linewidth]{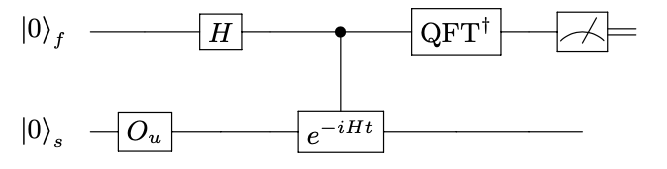}
	\caption{Quantum circuit for quantum phase estimation when the size of first register is 1. To avoid confusion, the top $H$ represents Hadamard gate and the bottom $H$ is Hamiltonian.}
	\label{fig6}
\end{figure}
After first Hadamard gate, the state of system is
\begin{equation}
	\sum_{n=1}^{N} \frac{|0\rangle+|1\rangle}{\sqrt{2}}\langle u_n|u\rangle_s  |u_n\rangle.
\end{equation}
As
\begin{equation}
	e^{-iHt}|u_n\rangle = e^{-i\lambda_nt}|u_n\rangle,
\end{equation}
so after the controlled-$e^{-iHt}$ gate, the state of system is
\begin{equation}
	\sum_{n=1}^{N} \langle u_n|u\rangle_s \frac{|0\rangle_f+e^{-i\lambda_nt}|1\rangle_f}{\sqrt{2}}|u_n\rangle .
\end{equation}
After the final Hadamard gate, the state of the system is
\begin{equation}
	\sum_{n=1}^{N} \langle u_n|u\rangle_s \frac{\left(|0\rangle_f+|1\rangle_f\right)+e^{-i\lambda_nt}\left(|0\rangle_f-|1\rangle_f\right)}{2}|u_n\rangle .
\end{equation}
Rewrite the equation.
\begin{equation}
	\sum_{n=1}^{N} \langle u_n|u\rangle_s \left(\frac{1+e^{-i\lambda_n t}}{2}|0\rangle_f+\frac{1-e^{-i\lambda_n t}}{2}|1\rangle_f\right)|u_n\rangle.
\end{equation}

If the eigenvalues of Hamiltonian $H$ satisfy
\begin{align}
	\lambda_n&=1,\;\;n=1,2,\cdots,k\nonumber\\
	\lambda_n&=0,\;\ n=k+1,\cdots,N
\end{align}
as in the quantum Gram-Schmidt process and the evolution time $t$ of the Hamiltonian $H$ is
\begin{equation}
	t=\pi,
\end{equation}
The final state of the quantum circuit then is
\begin{align}
	&\sum_{n=1}^{k} \langle u_n|u\rangle_s \left(\frac{1+e^{-i\pi}}{2}|0\rangle_f+\frac{1-e^{-i\pi}}{2}|1\rangle_f\right)|u_n\rangle\nonumber\\
	+&\sum_{n=k+1}^{N} \langle u_n|u\rangle_s \left(\frac{1+e^{-i\pi*0}}{2}|0\rangle_f+\frac{1-e^{-i\pi*0}}{2}|1\rangle_f\right)|u_n\rangle,
\end{align}
i.e,
\begin{equation}
	|1\rangle_f\left(\sum_{n=1}^{k} \langle u_n|u\rangle_s |u_n\rangle\right)+|0\rangle_f\left(\sum_{n=k+1}^{N} \langle u_n|u\rangle_s |u_n\rangle\right).
\end{equation}
So the probability that measured result from the first register is 0 is
\begin{equation}
	1-\sum_{n=1}^{k}|\langle u_n|b\rangle|^2.
\end{equation}
When we measure 0 from the first register, then the state of the second register is then
\begin{equation}
	|\psi\rangle=\frac{\sum_{n=k+1}^{N} \langle u_n|b\rangle |u_n\rangle}{\parallel \sum_{n=k+1}^{N} \langle u_n|b\rangle |u_n\rangle\parallel },
\end{equation}
which satisfies
\begin{equation}
	|\langle \psi|u_n\rangle|^2=0,\forall n=1,2,\cdots, k.
\end{equation}
.
\section{Proof of Lemmas in the section \ref{4b}}
\label{proof1}
\subsection{Proof of Lemma \ref{lm3}}
\begin{proof}
	\begin{align}
		\parallel A\otimes B\parallel_2^2 &= \left(\sqrt{\lambda_{max}\left(A\otimes B\right)^{\dagger}\left(A\otimes B\right)}\right)^2\nonumber\\
		&= \lambda_{max}\left(A\otimes B\right)^{\dagger}\left(A\otimes B\right)\nonumber\\
		&= \lambda_{max}\left(A^{\dagger}A \otimes B^{\dagger}B\right)\nonumber\\
		&= \lambda_{max}\left(A^{\dagger}A\right) \cdot\lambda_{max}\left(A^{\dagger}A\right)\nonumber\\
		&= \parallel A\parallel_2^2 \cdot \parallel B\parallel_2^2
	\end{align}
	\indent Therefore $\parallel A\otimes B\parallel_2=\parallel A\parallel_2 \cdot \parallel B\parallel _2$.
\end{proof}
\subsection{Proof of Lemma \ref{lm4}}
\begin{proof}
	Suppose the qubit number of the first register is 1. Then the exact and errant QPE unitary is
	\begin{equation}
		U_{exact}=\left(QFT^{\dagger}\otimes I\right)\left(C-U\right)\left(H \otimes I\right)
	\end{equation}
	\begin{equation}
		U_{real}=\left(QFT^{\dagger}\otimes I\right)\left(C-\exp(-iHt)\right)\left(H \otimes I\right)
	\end{equation}
	with $\parallel U-\exp(-iHt)\parallel <\epsilon_0$.
	
Define $\Delta U=\left(C-U\right)-\left(C-\exp(-iHt)\right)$. From Lemma \ref{lm3}, we get the following equation.
	\begin{align}
		\parallel \Delta U\parallel &=\parallel \left(C-U\right)-\left(C-\exp(-iHt)\right)\parallel \nonumber\\
		&= \parallel |1\rangle\langle 1|\otimes\left(U-\exp(-iHt)\right)\parallel \nonumber\\
		&=\parallel 1\rangle\langle 1 \parallel \parallel U-\exp(-iHt)\parallel \nonumber\\
		& <\epsilon_0.
	\end{align}
Therefore,
	\begin{align}
		& \parallel U_{real}-U_{exact}\parallel  \nonumber \\
		\leq & \;\parallel \left(QFT^{\dagger}\otimes I\right)\parallel \parallel \left(H\otimes I\right)\parallel \parallel \Delta U\parallel \nonumber\\
		=& \parallel \Delta U\parallel\nonumber\\
		<&\; \epsilon_0
	\end{align}.
	
	Therefore, we prove Lemma \ref{lm4}.
\end{proof}
\subsection{Proof of Lemma \ref{lm6}}
\label{seca1}
\indent In Algorithm \ref{alg3}, we generate a series states successively, so we use mathematical induction to prove Lemma \ref{lm6}.
\begin{proof}
	\indent Firstly, as we take $|u_1\rangle\equiv|a_1\rangle$, So $span\{a_1\}=span\{u_1\}$. It is clear that $|u_t\rangle$ and $|a_m\rangle$ is the amplitude encoding of $N-$dimension vector $u_t$ and $a_m$.
	\begin{equation}
		|a_m\rangle=\sum_{n=1}^{N}\frac{a_{nm}}{\parallel a_m \parallel} |n-1\rangle.
	\end{equation}
	\begin{equation}
		|u_t\rangle=\sum_{n=1}^{N}\frac{u_{nt}}{\parallel u_t \parallel} |n-1\rangle.
	\end{equation}
	
	Suppose after the first $k$ steps of quantum Gram-Schmidt process, we have generated $\{|u_1\rangle,|u_2\rangle,\cdots,|u_t\rangle\}$ for $\{a_1,a_2,\cdots,a_k\}$, satisfying $span\{a_1,a_2,\cdots,a_k\}=span\{u_1,u_2,\cdots,u_t\}$ with probability $\Omega(1)$. It is clear that $t$ is not always equal to $k$. It is because $\{a_1,a_2,\cdots,a_k\}$ is not always a set of linearly independent vectors. Therefore, $k\geq t$.
	
	In the $(k+1)$th step of the quantum Gram-Schmidt process, the input state of the second register is $|a_{k+1}\rangle$ which can be written as
	\begin{equation}
		\label{eqa1}
		|a_{k+1}\rangle=\sum_{n=1}^{t}\langle u_n|a_{k+1}\rangle |u_n\rangle + c |\psi\rangle,
	\end{equation}
	where we denote $c$ to be a real number
	\begin{equation}
		c=\sqrt{1-\sum_{n=1}^{t}\parallel\langle u_n|a_{k+1}\rangle\parallel^2}.
	\end{equation}
	After running the quantum circuit of the $(k+1)$th step of quantum Gram-Schmidt process, if we measure the first register and the result is $0$, we can  read out the $(k+1)$th state $|u_{t+1}\rangle$ from the second register as $|u_{t+1}\rangle\equiv|\psi\rangle$. So
	\begin{equation}
		a_{k+1}\in span\{u_1,u_2,\cdots,u_t,u_{t+1}\}
	\end{equation}
	and
	\begin{equation}
		\label{eqa2}
		span\{a_1,a_2,\cdots,a_k,a_{k+1}\} =span\{u_1,u_2,\cdots,u_t,u_{t+1}\}
	\end{equation}
	
	When $c$ as in Eq.~(\ref{eqa1}) is small, we may need to run the quantum circuit many times to measure 0 in the first register. When $a_{k+1}\in span\{a_1,a_2,\cdots,a_k\}$, then $c$ is 0 so we cannot measure $0$ from the first register. However, we cannot know this information in advance, so we need to run the quantum circuit repeatedly and make measurements to determine whether it is the linearly dependent case.
	
	Denote $p$ the probability of measuring $0$ in the first register. Then
	\begin{equation}
		p=c^2=1-\sum_{n=1}^{t}|\langle u_n|a_{k+1}\rangle|^2
	\end{equation}
	
	As we don't know anything about the distribution of $p$ in advance so the prior distribution of $p$ is uniform distribution over [0,1]. The first time 0 is measured in the first register, the measurement number is $W$. It is important to note that every measurement requires a repeated run of the  quantum circuit. Then $W$ is subject to geometric distribution.
	\begin{equation}
		P(W=w|p)=\left(1-p\right)^{w-1}p
	\end{equation}
	\begin{equation}
		P(W>w|p)=\left(1-p\right)^{w}
	\end{equation}
	So when 0 is still not measured after $w$ measurements, the posterior distribution of $p$ is
	\begin{align}
		P(p|W>w)&=\frac{P(W>w|p)P(p)}{\int_{0}^{1}P(W>w|p)P(p)dp}\nonumber\\
		&=\frac{(1-p)^w}{\int_{0}^{1}(1-p)^w dp}\nonumber\\
		&=(w+1)(1-p)^w
	\end{align}
	\indent So when the measurement number is large, but 0 is still not measured in the first register, then we know with high probability $a_{k+1}\in span\{a_1,a_2,\cdots,a_k\}$. \\
	\indent Define
	\begin{equation}
		f(x)=x\ln x-x\ln (1-x)+\ln x\ln (1-x), \;\;x\in(0,1).
	\end{equation}
	\indent Then
	\begin{equation}
		f'(x)=\left(-1+\frac{1}{x}\right)\ln\left(1-x\right)+\frac{-1+x\ln\left(x\right)}{-1+x}.
	\end{equation}
	\begin{equation}
		f''(x)=-\frac{\ln(1-x)+\frac{x(1-2x+x\ln(x))}{(-1+x)^2}}{x^2}.
	\end{equation}
	\indent Define
	\begin{equation}
		g(x)=\ln(1-x)+\frac{x(1-2x+x\ln(x))}{(-1+x)^2}.
	\end{equation}
	\indent Then
	\begin{equation}
		g'(x)=\frac{2x(x-\ln x)}{(-1+x)^{3}}.
	\end{equation}
	\indent As $x\in(0,1)$ and $x>\ln x$, so $g'(x)<0,\;\;x\in(0,1)$. And $\lim\limits_{x\to 0}g(x)=0$, so $g(x)<0,\;\;x\in(0,1)$. So $f''(x)>0,\;\;x\in(0,1)$.\\
	\indent As $\lim\limits_{x\to 0}f'(x)=0$, so $f'(x)>0,\;\;x\in(0,1)$. And $\lim\limits_{x\to 0}f(x)=0$ so
	\begin{equation}
		f(x)>0,\;\;x\in(0,1).
	\end{equation}
	
	Take $x=\epsilon$ we have
	\begin{equation}
		f(\epsilon)=\epsilon\ln \epsilon-\epsilon\ln(1-\epsilon)+\ln \epsilon\ln(1-\epsilon)>0.
	\end{equation}
	Devided by $\epsilon\ln(1-\epsilon)$. We get
	\begin{equation}
		\frac{\ln \epsilon}{\ln(1-\epsilon)}<\frac{1}{\epsilon}\ln(\frac{1}{\epsilon})+1.
	\end{equation}
	
	So when $w=1/\epsilon\ln\left(1/\epsilon\right)$,
	\begin{align}
		\label{eqa3}
		\int_{0}^{\epsilon}P(p|W>w)dp&=\int_{0}^{\epsilon} (w+1)(1-p)^wdp\nonumber\\
		&=1-(1-\epsilon)^{1+w}\nonumber\\
		&= 1-(1-\epsilon)^{\frac{1}{\epsilon}\ln\left(\frac{1}{\epsilon}\right)+1}\nonumber\\
		&>1-(1-\epsilon)^{\ln\epsilon/\ln (1-\epsilon)}\nonumber\\
		&=1-\epsilon.
	\end{align}
	Therefore, for each $k=1,2,\cdots,M$, if we run the quantum circuit of the $(k+1)$th step of quantum Gram-Schmidt process for $w=1/\epsilon\ln\left(1/\epsilon\right)$ times, and the measure outcome from the first register is $1$ for each run, then we can infer that $a_{k+1}\in span\{a_1,a_2,\cdots,a_k\}$ holds with probability larger than $1-\epsilon$. In this case, we do not construct a new $|u_{t+1}\rangle$ because it is already true that
	\begin{equation}
		span\{a_1,a_2,\cdots,a_{k+1}\}=span\{u_1,u_2,\cdots,u_{t}\}.
	\end{equation}
	
	To sum up, after the first $k$ steps, $span\{a_1,a_2,\cdots,a_k\}=span\{u_1,u_2,\cdots,u_t\}$ holds with probability $\Omega(1)$, then from Eq.~(\ref{eqa2}) we prove when we measure 0 from the first register then $span\{a_1,a_2,\cdots,a_{k+1}\}=span\{u_1,u_2,\cdots,u_{t+1}\}$ holds with probability $\Omega(1)\times1=\Omega(1)$. And from Eq.~(\ref{eqa3}) we prove when we measure the first register for $1/\epsilon\ln\left(1/\epsilon\right)$ times and still don't get 0, then $span\{a_1,a_2,\cdots,a_{k+1}\}=span\{u_1,u_2,\cdots,u_{t}\}$ holds with probability $\Omega(1)\times(1-\epsilon)=\Omega(1)$. Thus, after the first $k+1$ steps, the linear subspace spanned by the generated $\{u_t\}$ is equal to $span\{a_1,a_2,\cdots,a_{k+1}\}$ with probability $\Omega(1)$. The proof of Lemma \ref{lm6} is completed.
\end{proof}
\subsection{Proof of Lemma \ref{lm5}}
In Algorithm \ref{alg3}, we generate a series states successively, so we again use mathematical induction to prove Lemma \ref{lm5}.
\begin{proof}
	\indent Firstly, as we take $|u_1\rangle\equiv|a_1\rangle$, So $u_1$ itself is a normalized vector. It is clear that $|u_t\rangle$ and $|a_m\rangle$ is the amplitude encoding of $N-$dimension vector $u_t$ and $a_m$.
	\begin{equation}
		|a_m\rangle=\sum_{n=1}^{N}\frac{a_{nm}}{\parallel a_m \parallel} |n-1\rangle.
	\end{equation}
	\begin{equation}
		|u_t\rangle=\sum_{n=1}^{N}\frac{u_{nt}}{\parallel u_t \parallel} |n-1\rangle.
	\end{equation}
	
	Suppose in the first $k$ steps of quantum Gram-Schmidt process, we have generated $\{u_1,u_2,\cdots,u_t\}$ for $\{a_1,a_2,\cdots,a_k\}$, satisfying $u_{t_1}^{\dagger}u_{t_2}=O(\epsilon)$ with probability $\Omega(1)$ for $\forall t_1\neq t_2$. It is clear that $t$ is not always equal to $k$. It is because $\{a_1,a_2,\cdots,a_k\}$ is not always a set of linearly independent vectors so $k\geq t$. We want to prove the constructed $u_{t+1}$ is nearly  orthogonal to all previous constructed vectors $\{u_1,u_2,\cdots,u_t\}$ with the errant Hamiltonian simulation step of the QPE circuit, i.e.,
	\begin{equation}
		u_{t+1}^{\dagger}u_{t'}=O(\epsilon),\forall t'=1,2,\cdots,t
	\end{equation}
	
	Considering the $(k+1)$th step of quantum Gram-Schmidt process, we assume that $a_{k+1}\notin \{a_1,a_2,\cdots, a_k\}$. For the case that $a_{k+1}\in \{a_1,a_2,\cdots, a_k\}$, in this step we do not construct a new $|u_{t+1}\rangle$ so it is a trival case and follow the mathematical induction immediately. So we only consider the case that $a_{k+1}\notin \{a_1,a_2,\cdots, a_k\}$. In the $(k+1)$th step, the $\left(t+1\right)$th state $|u_{t+1}\rangle$ is constructed based on previous constrcuted states $\{|u_1\rangle,|u_2\rangle,\cdots,|u_{t}\rangle\}$ and $a_{k+1}$. $|a_{k+1}\rangle$ which can be written as
	\begin{equation}
		\label{eqa10}
		|a_{k+1}\rangle=\sum_{n=1}^{t}\langle u_n|a_{k+1}\rangle |u_n\rangle + c |\psi\rangle,
	\end{equation}
	where we denote $c$ to be a real number
	\begin{equation}
		c=\sqrt{1-\sum_{n=1}^{t}\parallel\langle u_n|a_{k+1}\rangle\parallel^2}.
	\end{equation}
	
	The ideal output of the circuit assuming there is no error in the Hamiltonian simulation step is
	\begin{align}
		|\psi\rangle_{ideal}&=U_{exact}\left(|0\rangle \otimes |a_{k+1}\rangle\right)\nonumber\\
		&=\sum_{n=1}^{t}\langle u_n|a_{k+1} \rangle|1\rangle |u_n\rangle + c|0\rangle |\psi\rangle
	\end{align}
	But with errant Hamilonian simulation step, the actual output state is $|\psi\rangle_{real}$.
	\begin{equation}
		|\psi\rangle_{real}=U_{real}\left(|0\rangle \otimes |a_{k+1}\rangle\right).
	\end{equation}
	
	From Lemma \ref{lm4} we know, the error of QPE circuit is bounded by
	\begin{equation}
		\parallel U_{real}-U_{exact}\parallel  < \epsilon_0.
	\end{equation}
	So
	\begin{align}
		\parallel |\psi\rangle_{real}-|\psi\rangle_{ideal}\parallel &=\parallel \left(U_{real}-U_{exact}\right)|0\rangle \otimes |a_{k+1}\rangle\parallel \nonumber\\
		& \leq \parallel U_{real}-U_{exact}\parallel \nonumber\\
		& < \epsilon_0.
		\label{eqa4}
	\end{align}
	\indent When we measure the first register and get result 0, then the state stored in the second register is then $|\psi\rangle$. We denote  $|\psi\rangle\equiv|u_{t+1}\rangle$.
	\begin{align}
		\frac{|0\rangle\langle 0|\otimes I |\psi\rangle_{ideal}}{\sqrt{\langle\psi|_{ideal}|0\rangle\langle 0|\otimes I|\psi\rangle_{ideal}}}&=|0\rangle \otimes |u_{t+1}\rangle_{ideal}\nonumber\\
		\frac{|0\rangle\langle 0|\otimes I |\psi\rangle_{real}}{\sqrt{\langle\psi|_{real}|0\rangle\langle 0|\otimes I|\psi\rangle_{real}}}&=|0\rangle \otimes |u_{t+1}\rangle_{real}
	\end{align}
	Denote the probability that 0 is measured from the first register $p(0)_{ideal}$ and $p(0)_{real}$ in the ideal case and in the real case, i.e.,
	\begin{equation}
		p(0)_{ideal}=\langle\psi|_{ideal}|0\rangle\langle 0|\otimes I|\psi\rangle_{ideal},
	\end{equation}
	\begin{equation}
		p(0)_{real}=\langle\psi|_{real}|0\rangle\langle 0|\otimes I|\psi\rangle_{real}
	\end{equation}
	So,
	\begin{align}
		&\;|p(0)_{real}-p(0)_{ideal}| \nonumber\\
		=& \;|\langle \psi|_{real}|0\rangle\langle 0|\otimes I|\psi\rangle_{real}-\langle\psi|_{ideal}|0\rangle\langle 0|\otimes I|\psi\rangle_{ideal}|\nonumber\\
		<& \;2\epsilon_0
	\end{align}
	
	As $\langle u_n|u_{t+1}\rangle_{ideal}=0,\;\forall n\leq t$,
	\begin{align}
		|\langle u_n|u_{t+1}\rangle_{real}|&= |\langle u_n|\left(|u_{t+1}\rangle_{real}-|u_{k+1}\rangle_{ideal}\right)|\nonumber\\
		&\leq \parallel |u_{t+1}\rangle_{real}-|u_{k+1}\rangle_{ideal}\parallel \nonumber\\
		&=\parallel |0\rangle|u_{t+1}\rangle_{real}-|0\rangle|u_{t+1}\rangle_{ideal}\parallel \nonumber\\
		&= \parallel \frac{|0\rangle\langle 0|\otimes I |\psi\rangle_{ideal}}{\sqrt{p(0)_{ideal}}}-\frac{|0\rangle\langle 0|\otimes I |\psi\rangle_{real}}{\sqrt{p(0)_{real}}}\parallel
	\end{align}
	\indent Expand $1/\sqrt{p(0)_{real}}$ as
	\begin{align}
		\frac{1}{\sqrt{p(0)_{real}}}&= \frac{1}{\sqrt{p(0)_{ideal}}}\frac{1}{\sqrt{1+\frac{p(0)_{real}-p(0)_{ideal}}{p(0)_{ideal}}}}\nonumber\\
		&= \frac{1}{\sqrt{p(0)_{ideal}}}\left(1+\frac{p(0)_{real}-p(0)_{ideal}}{2p(0)_{ideal}}\right)\nonumber\\
		&+O\left(\frac{\epsilon_0^2}{p(0)^{3/2}}\right)
	\end{align}
	\indent So,
	\begin{align}
		|\langle u_n|u_{t+1}\rangle_{real}| &\leq \parallel \frac{|0\rangle\langle 0|\otimes I \left(|\psi\rangle_{ideal}-|\psi\rangle_{real}\right)}{\sqrt{p(0)_{ideal}}}\parallel \nonumber\\
		& \;\;+ |\frac{p(0)_{real}-p(0)_{ideal}}{2p(0)_{ideal}}|*\parallel \frac{|0\rangle\langle 0|\otimes I |\psi\rangle_{real}}{\sqrt{p(0)_{ideal}}}\parallel \nonumber\\
		&\;\; +O\left(\frac{\epsilon_0^2}{p(0)^{3/2}}\right)*\parallel |0\rangle\langle 0|\otimes I |\psi\rangle_{real}\parallel\nonumber\\
		&< \frac{\epsilon_0}{\sqrt{p(0)_{ideal}}} + \frac{2\epsilon_0}{2*p(0)_{ideal}^{3/2}}+O\left(\frac{\epsilon_0^2}{p(0)^{3/2}}\right)\nonumber\\
		&<\frac{2\epsilon_0}{p(0)_{ideal}^{3/2}}
	\end{align}
	\indent For input $|a_{k+1}\rangle=\sum_{n=1}^{t}\langle u_n|a_{k+1}\rangle |u_n\rangle + c |\psi\rangle$
	\begin{align}
		p=p(0)_{ideal} = c^2= 1-\sum_{n=1}^{t}|\langle u_n|a_{k+1}\rangle|^2
	\end{align}
	\indent We dentoe the measurement time number $W$ when we measure 0 for the first time in the first register. Then
	\begin{equation}
		P(W=w|p)=\left(1-p\right)^{w-1}p
	\end{equation}
	\indent And the prior distribution of $p$ is uniform distribution over [0,1]. So posterior distribution of $p$ is
	\begin{align}
		P(p|W=w)&=\frac{P(W=w|p)P(p)}{\int_{0}^{1}P(W=w|p)P(p)dp}\nonumber\\
		&=\frac{\left(1-p\right)^{w-1}p}{\int_{0}^{1}\left(1-p\right)^{w-1}p \;dp}\nonumber\\
		&=w(w+1)p\left(1-p\right)^{w-1}
	\end{align}
	\indent Let $x=\delta^{1/2}/w$, so
	\begin{align}
		\int_{x}^{1}P(p|W=w)dp&=(1-x)^w(1+wx)\nonumber\\
		&>(1-wx)(1+wx)\nonumber\\
		&=1-w^2x^2\nonumber\\
		&=1-\delta
	\end{align}
	So we have
	\begin{equation}
		P(p>\frac{\delta^{1/2}}{w}|W=w)>1-\delta
	\end{equation}
	From section \ref{seca1} we know the maximum mearsurement number of the circuit is $1/\epsilon\ln\left(1/\epsilon\right)$. So with probability larger than $1-\delta$, we have
	\begin{equation}
		p > \frac{\delta^{1/2}}{w}> \frac{\delta^{1/2}}{\frac{1}{\epsilon}\ln\left(\frac{1}{\epsilon}\right)}
	\end{equation}
	So taking $\delta=\epsilon$, with probability larger than $1-\epsilon$
	\begin{equation}
		p > \frac{\epsilon^{1/2}}{w}> \frac{\epsilon^{1/2}}{\frac{1}{\epsilon}\ln\left(\frac{1}{\epsilon}\right)}.
	\end{equation}
	Therefore, with probability larger than $1-\epsilon$,
	
	\begin{align}
		|\langle u_n|u_{t+1}\rangle_{real}|&< \frac{2\epsilon_0}{p^{3/2}}\nonumber\\
		& < 2\epsilon_0\left(\frac{1}{\epsilon}\right)^{3/4}\left(\frac{1}{\epsilon}\ln\left(\frac{1}{\epsilon}\right)\right)^{3/2}\nonumber\\
		&=2\epsilon_0\left(\frac{1}{\epsilon}\right)^{9/4}\left(\ln\left(\frac{1}{\epsilon}\right)\right)^{3/2}\nonumber\\
		&=O\left(\epsilon_0\left(\frac{1}{\epsilon}\right)^{3}\right).
	\end{align}
	
	When we take $\epsilon_0=\epsilon^4$ as in Algorithm \ref{alg3}, we have our conclusion, With probability $\Omega(1)$, the constructed $|u_{t+1}\rangle$ with errant Hamiltonian simulation step is nearly orthogonal to all previous constructed states, i.e.,
	\begin{equation}
		|\langle u_n|u_{t+1}\rangle_{real}|=O(\epsilon), \;\; \forall n,\leq t.
	\end{equation}
	Therefore, in the $(k+1)th$ step of quantum Gram-Schmidt process, a state $|u_{t+1}\rangle$ is constructed, encoding a vector $u_{t+1}$. And $\{u_1,u_2,\cdots,u_t,u_{t+1}\}$ satisfies
	\begin{equation}
		u_{t_1}^{\dagger}u_{t_2}=O(\epsilon), \forall t_1\neq t_2
	\end{equation}
	with probability $\Omega(1)$. Thus, the proof of Lemma \ref{lm5} is completed.
\end{proof}

% If you have acknowledgments, this puts in the proper section head.

% Create the reference section using BibTeX:

\bibliographystyle{apsrev4-1}
\bibliography{QR-decomposition}
\end{document}